\crefname{equation}{}{}
\let\ORGhypersetup\hypersetup
\protected\def\hypersetup{\ORGhypersetup}
  \def\hypersetup#1{}%
  \let\Cref\crtCref
  \let\cref\crtcref
\theoremstyle{definition}
\theoremstyle{plain}
\newtheorem{theorem}{Theorem}
\newtheorem{lemma}{Lemma}
\newtheorem{claim}{Claim}
\theoremstyle{remark}
\newtheorem{remark}{Remark}
\setlist{listparindent=\parindent,parsep=0pt,itemsep=1em}
\setlist[itemize]{label=$-$,noitemsep}
\setlist[enumerate]{itemsep=1mm}
\setlist[description]{leftmargin=\parindent}
\newcommand{\Xor}{\text{\upshape\textsc{Xor}}}
\newcommand{\Or}{\text{\upshape\textsc{Or}}}
\newcommand{\x}{{\bm{x}}}
\newcommand{\y}{{\bm{y}}}
\newcommand{\ib}{{\bm{i}}}
\newcommand{\Ub}{{\bm{U}}}
\newcommand{\Ab}{{\bm{A}}}
\newcommand{\R}{{\bm{R}}}
\newcommand{\T}{{\bm{T}}}
\newcommand{\X}{{\bm{X}}}
\newcommand{\Y}{{\bm{Y}}}
\newcommand{\As}{\mathcal{A}}
\renewcommand{\epsilon}{\varepsilon}
\renewcommand{\setminus}{\smallsetminus}
\newcommand{\bool}{\{0, 1\}}
\DeclareMathOperator*{\dist}{dist}
\DeclareMathOperator*{\supp}{supp}
\DeclareMathOperator*{\Exp}{\mathbb{E}}
\let\Pr\undefined
\DeclareMathOperator*{\Pr}{Pr}
\newcommand{\entropy}{\operatorname{H}}
\let\OLDthebibliography\thebibliography
\renewcommand\thebibliography[1]{
  \OLDthebibliography{#1}
  \setlength{\parskip}{2pt}
}
\begin{document}

\mbox{}\vspace{8mm}

\begin{center}
{\huge Top-Down Lower Bounds for Depth-Four Circuits}
\\[1.3cm] \large
	
\setlength\tabcolsep{1.2em}
\begin{tabular}{cccc}
Mika G\"o\"os&
Artur Riazanov&
Anastasia Sofronova&
Dmitry Sokolov\\[-1mm]
\small\slshape EPFL &
\small\slshape EPFL &
\small\slshape EPFL &
\small\slshape EPFL
\end{tabular}

\vspace{6mm}
	
\normalsize

{\today}
	
\vspace{4mm}
\end{center}

\begin{quote}
\noindent\small
{\bf Abstract.}~
We present a top-down lower-bound method for depth-$4$  boolean circuits. In particular, we give a new proof of the well-known result that the parity function requires {depth-$4$} circuits of size exponential in $n^{1/3}$. Our proof is an application of robust sunflowers and block unpredictability.
\end{quote}

\section{Introduction}

The working complexity theorist has three main weapons in their arsenal when proving lower bounds against small-depth boolean circuits (consisting of $\land$, $\lor$, $\neg$ gates of unbounded fanin). The most wildly successful ones are the \emph{random restriction} method~\cite{Furst1984,Ajtai1983} and the \emph{polynomial approximation} method~\cite{Razborov1987,Smolensky1987}. The random restriction method, in particular, is applied {\bf\itshape bottom-up}: it starts by analysing the bottom-most layer of gates next to input variables and finds a way to simplify the circuit so as to reduce its depth by one. The third main weapon, which is the subject of this paper, is the {\bf\itshape top-down} method: starting at the top (output) gate we walk down the circuit in search of a mistake in the computation.
Top-down methods (often phrased in the language of communication complexity~\cite{Karchmer1990}) are routinely used to prove:
\begin{enumerate}
\item Polynomial lower bounds for formulas~\cite{Karchmer1995,Edmonds2001,Gavinsky2017,Dinur2017,Rezende2020}.
\item Exponential lower bounds for \emph{monotone} circuits and formulas~\cite{Karchmer1990,Raz1992,Nisan1993,Grigni1995,Raz1999,Goos2018,Rezende2016,Pitassi2017,Garg2020,Rezende2022}.
\item \label{item}
Exponential lower bounds for depth-3 circuits~\cite{Baker1979,Santha1989,Ko1990,Hastad1995,Russell1998,Paturi1999,Paturi2000,Impagliazzo2001,Paturi2005,Wolfovitz2006,Bohler2006,Meir2019,Golovnev2021,Frankl2022,Goos2023}.
\end{enumerate}

In particular, it has been an open problem (posed in~\cite{Hastad1995,Meir2019}) to prove exponential lower bounds for depth-4 circuits by a top-down argument. We develop such a lower-bound method in this paper and use it to prove a lower bound for the parity function. It has been long known using bottom-up methods that the depth-4 complexity of $n$-bit parity is $\smash{2^{\Theta(n^{1/3})}}$~\cite{Yao1985,Hastad1987}. We recover a slightly weaker bound.
\begin{theorem} \label{thm:main}
Every depth-$4$ circuit computing the $n$-bit parity requires $2^{n^{1/3-o(1)}}$ gates.
\end{theorem}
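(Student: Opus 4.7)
The plan is to show that any depth-$4$ circuit $C$ of size $s$ computing $n$-bit parity must satisfy $s \ge 2^{n^{1/3-o(1)}}$, by a top-down walk into a carefully chosen top sub-circuit combined with sunflower-based simplification and block unpredictability. Normalize $C$ into $\Sigma_4$-form $C = \bigvee_{i \le s} C_i$, where each $C_i$ is a $\Pi_3$-subcircuit (an AND of DNFs). Since $C$ accepts $2^{n-1}$ inputs of parity $1$, by averaging some $C_{i^*}$ accepts at least $2^{n-1}/s$ of them; let $S \subseteq \mathrm{parity}^{-1}(1)$ denote this accepting set. Crucially, $C_{i^*}$ is a disjunct of $C$, so $C_{i^*} \le \mathrm{parity}$ pointwise: it accepts $S$ and rejects every parity-$0$ input. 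The goal of the rest of the argument is to show that, after a suitable random restriction, $C_{i^*}$ becomes too simple to maintain this separation.

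The main technical step is a sunflower-based simplification of $C_{i^*}$. Partition $[n]$ into blocks of appropriate size and let $\rho$ be a random block-restriction whose alive blocks are chosen at random while the remaining blocks are fixed to values consistent with $S$. For each bottom DNF $D$ appearing in $C_{i^*}$, apply the robust sunflower lemma of Alweiss--Lovett--Wu--Zhang to the family of $D$'s term-supports: whenever $D$ has more than $(\log s)^{O(w)}$ terms of width $w$, one extracts a $(p,\epsilon)$-robust sunflower with a short core $Y$. Under $\rho$, with high probability either some petal survives wholly on killed blocks and forces $D|_\rho \equiv 1$, or the core $Y$ alone controls the value of $D|_\rho$ on the alive blocks. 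Applying this DNF-by-DNF reduces $C_{i^*}|_\rho$ to a $\Pi_3$-circuit of bottom width $O(\log s)$ on the alive blocks --- yet it must still accept $S|_\rho$ and reject every parity-$0$ input there.

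Block unpredictability for parity closes the argument: parity on sufficiently many alive blocks is maximally random-looking, so no such simplified $\Pi_3$-sub-circuit can separate parity-$1$ from parity-$0$ inputs at these scales, contradicting the previous step whenever $s \le 2^{n^{1/3-o(1)}}$. The main obstacle, and the novel ingredient, is performing the sunflower extraction as a single top-down step: the same $\rho$ must simultaneously trivialize every wide DNF in $C_{i^*}$ while remaining consistent with the conditioning on $S$. Bottom-up switching-lemma arguments handle one layer at a time and avoid this coordination; the top-down formulation instead demands a union bound across all DNFs in $C_{i^*}$, and it is this coordination, together with the polylogarithmic overhead in the robust sunflower lemma, that is responsible for the $n^{o(1)}$ loss compared with the sharp $n^{1/3}$ exponent of H\aa stad.
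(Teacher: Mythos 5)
Your proposal begins in the same place as the paper---a greedy descent through the top gate to a sub-circuit $C_{i^*}$ that handles many inputs of one parity---but then diverges onto a fundamentally different (and, as sketched, incomplete) route. The paper's entire point is to give a \emph{top-down} argument that avoids random restrictions: it never restricts variables. You, by contrast, invoke a random block-restriction $\rho$ and a sunflower-based switching step to collapse each bottom DNF of $C_{i^*}$. That is the bottom-up switching-lemma method in the Rossman/Razborov style, not the top-down walk the theorem is meant to illustrate, so even if your sketch could be completed it would not be the kind of proof the paper is after.

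There are also two concrete gaps in the sketch itself. First, the coordination issue you flag in passing---that the same $\rho$ must simplify \emph{every} DNF in $C_{i^*}$ while remaining consistent with $S$---is not a matter of accounting for a polylog loss: a restriction chosen consistently with a structured accepting set $S$ is not the uniform restriction the robust-sunflower switching argument needs, and no mechanism is given for reconciling these two requirements. Second, the closing step does not follow. Reducing $C_{i^*}|_\rho$ to a $\Pi_3$-circuit of bottom width $O(\log s)$ is not by itself a contradiction---such circuits can compute parity on the alive coordinates---and the appeal to ``block unpredictability'' to rule it out is a misuse of that lemma. In the paper, block unpredictability (\cref{thm:unpred}) is a statement about a dense set $X$ of strings: a random coordinate-block of a random $\x\sim X$ rarely admits a small certificate. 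It is applied to the set $X$ accepted by the chosen CNF, to produce a point $x$ all of whose block-flips look locally like members of $X$; it is not a claim that ``parity is random-looking.''

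For contrast, the roles of the two tools in the paper are essentially opposite to yours. Robust sunflowers are used to build a \emph{mirror set} $M\subseteq\Xor^{-1}(1)$: starting from points $x$ that are locally dense in the rejection set $Y$, one shifts to a nearby kernel $x'$ so that $Y-x'$ is $(p,\epsilon)$-satisfying (\cref{lem:spreadify}); this guarantees that a random $n^{1/3+o(1)}$-block flip from $x'$ lands in $Y$ with high probability. Block unpredictability is then applied to $X\subseteq M$ accepted by the next CNF down, to find an $x$ for which a typical random block $R$ has \emph{no} small certificate with respect to $X$; flipping inside $R$ to reach some $y\in Y$ produces a local limit, and a narrow clause of the CNF rejecting a large piece $Y''\subseteq Y$ must then also reject some element of $X$, giving the contradiction. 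No variable is ever restricted; the ``simplification'' is replaced by the careful choice of sets $Y,M,X,Y',Y''$ guiding the walk.
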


Our top-down proof of this theorem is a relatively simple application of two known techniques: robust sunflowers~\cite{Rossman2014,Alweiss2021,Rao2020} and unpredictability from partial information~\cite{Meir2019,Smal2018,Viola2021}, which we generalise to blocks of coordinates (obtaining essentially best possible parameters).

A major motivation for the further development of top-down methods is that the method is, in a precise sense, \emph{complete} for constant-depth circuits, in that it can be used to prove tight lower bounds (up to polynomial factors) for \emph{any} boolean function; see \cref{rem:complete}. The same is not known to hold for the aforementioned bottom-up techniques. For example, there is currently no known bottom-up proof for the depth-3 circuit lower bound that underlies the oracle separation $\textsf{AM}\not\subseteq \Sigma_2\textsf{P}$~\cite{Santha1989,Ko1990,Bohler2006}. We suspect more generally that top-down methods could prove useful in settings where the bottom-up methods have failed so far, such as proving lower bounds against $\mathsf{AC}^0\circ\oplus$ circuits computing inner-product~\cite{Cheraghchi2018,Ezra2022,Huang2022,Servedio2012} or against the polynomial hierarchy in communication complexity~\cite{Babai1986}.

\section{Proof overview}

Before we explain our new top-down lower bound method for depth-4 circuits, let us review one particular top-down technique for depth-3 circuits that we build on. Namely, a lower bound using the information-theoretic \emph{unpredictability lemma} of Meir and Wigderson~\cite{Meir2019,Smal2018,Viola2021}.

\subsection{Depth-3 via bit unpredictability}
Suppose $X\subseteq \bool^n$ is a set of $n$-bit strings and $i\in[n]$ a coordinate. A \emph{certificate} for $i$ with respect to~$X$ is a pair $(Q,a)$ such that $Q \subseteq [n] \setminus \{i\}$, $a \in \{0,1\}^Q$, and there exists a bit $b \in \{0,1\}$ such that every $x \in X$ with $x_Q = a$ satisfies $x_i = b$. In words, the partial assignment defined by $(Q,a)$ will correctly predict that the $i$-th bit must be $b$.  The \emph{size} of the certificate is $|Q|$. We also say that $x$ \emph{contains} a size-$q$ certificate for $i$ (wrt~$X$) if there is some size-$q$ certificate of the form $(Q,x_Q)$ for $i$. Meir and Wigderson~\cite{Meir2019} proved that if $X$ is a large set, a uniform random string $\x\sim X$ will not contain a small certificate for a uniform random coordinate $\ib\sim[n]$ with high probability. There is also an alternative proof by Smal and Talebanfard~\cite{Smal2018} which tightens the parameters.
\begin{lemma}[Bit unpredictability~{\cite{Meir2019}}]
\label{lem:bit-unpred}
Let $X\subseteq\{0,1\}^n$ have density $|X|/2^n\geq 2^{-k}$. Then for any $q\geq 1$,
\[
\Pr_{(\x,\ib)\sim X\times  [n]}[\, \text{$\x$ contains a size-$q$ certificate for $\ib$ wrt $X$}\,] 
~\leq~
O(kq/n).
\]
\end{lemma}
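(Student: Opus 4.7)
The plan is to run an entropy chain-rule argument on $\bm{x}\sim X$ with the coordinates revealed in a uniformly random order, and then convert ``expected uncertainty'' into ``rarity of determined coordinates''. With $H(\bm{x}) = \log|X|\ge n-k$, for each permutation $\pi$ of $[n]$ and each $j\in[n]$ define the surprise
\[
s_j(\bm{x},\pi) \;:=\; -\log \Pr_{\bm{x}'\sim X}\!\big[\bm{x}'_{\pi(j)} = \bm{x}_{\pi(j)} \,\big|\, \bm{x}'_{\pi(<j)} = \bm{x}_{\pi(<j)}\big].
\]
Note that $s_j(\bm{x},\pi)=0$ iff $\bm{x}_{\pi(j)}$ is forced by $\bm{x}_{\pi(<j)}$ across~$X$. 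The chain rule gives $\sum_{j=1}^n \Exp_{\bm{x}}[s_j(\bm{x},\pi)] = H(\bm{x})$ for every~$\pi$.

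The first key step is to upgrade this entropy identity into a bound on how often $s_j=0$. Conditional on any prefix $\bm{x}_{\pi(<j)}=a$, either the next bit is fully determined---in which case $s_j=0$ a.s.\ and $H(\bm{x}_{\pi(j)}\mid a)=0$---or it is not, in which case $s_j>0$ a.s.\ while $H(\bm{x}_{\pi(j)}\mid a)\le 1$ since $\bm{x}_{\pi(j)}$ is a bit. In either case $\Pr[s_j>0\mid a]\ge \Exp[s_j\mid a]$, so $\sum_j \Pr_{\bm{x}}[s_j(\bm{x},\pi)=0]\le n-H(\bm{x})\le k$ for every~$\pi$. Averaging over a uniform permutation $\bm{\pi}$ and a uniform index $\bm{j}\in[n]$ yields $\Pr_{\bm{x},\bm{\pi},\bm{j}}[s_{\bm{j}}(\bm{x},\bm{\pi})=0]\le k/n$.

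To link this with the certificate event, set $\bm{i}:=\bm{\pi}(\bm{j})$, which is uniform on $[n]$ and independent of $\bm{x}$, giving the desired distribution on $(\bm{x},\bm{i})$. If $(\bm{x},\bm{i})$ admits any size-$q$ certificate $Q^*\subseteq[n]\setminus\{\bm{i}\}$, then whenever the random permutation places $Q^*$ before position~$\bm{j}$ we have $s_{\bm{j}}=0$. Conditional on $\bm{i}$, the pair $(\bm{\pi},\bm{j})$ is uniform over pairs with $\bm{\pi}(\bm{j})=\bm{i}$, and a short hockey-stick computation gives
\[
\Pr_{\bm{\pi},\bm{j}\mid\bm{i}}\!\big[Q^*\subseteq\{\bm{\pi}(1),\dots,\bm{\pi}(\bm{j}-1)\}\big] \;=\; \frac{1}{n}\sum_{j=1}^{n}\frac{\binom{j-1}{q}}{\binom{n-1}{q}} \;=\; \frac{1}{q+1}.
\]
Chaining the inequalities,
\[
\tfrac{1}{q+1}\cdot\Pr[\text{cert exists}] \;\le\; \Pr\!\big[\text{cert exists and } s_{\bm{j}}=0\big] \;\le\; \Pr[s_{\bm{j}}=0] \;\le\; k/n,
\]
which delivers the claimed $O(kq/n)$ bound.

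The step I expect to be trickiest is the inequality $\Pr[s_j>0]\ge \Exp[s_j]$: it is not a generic Markov bound but leans on the binarity of each coordinate, namely that conditional on any prefix the next bit is either deterministic or has entropy at most~$1$ while being nonzero almost surely. Everything else---the chain rule, the hockey-stick identity, and the union with the certificate event---is routine bookkeeping once that inequality is in hand.
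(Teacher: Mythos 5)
Your proof is correct. Let me flag one thing about the setting first: the paper does not actually prove \cref{lem:bit-unpred} itself; it is cited to Meir--Wigderson and to Smal--Talebanfard, and the paper's only related proof is of the block generalization \cref{thm:unpred}, which at $r=1$ yields only the weaker bound $O(kq/n)^{1/6}$. So there is no in-paper proof to compare against directly, and your argument recovers the sharp $O(kq/n)$ bound that the paper imports as a black box.

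Your route is genuinely different from the technique the paper uses for the block version. The paper's proof of \cref{thm:unpred} passes through shattering: it reduces certificates to non-shattered sets, invokes the Pajor/Sauer--Shelah lemma and a Shearer-type deficiency bound, and samples a random superset $\T$ of $\R$, incurring lossy exponents ($1/4$ for shattering, $1/6$ overall). Your argument is instead a direct pointwise-surprise computation: reveal coordinates in a uniformly random order, use the chain rule to get $\sum_j \Exp[s_j]=H(\x)\ge n-k$, and convert that expectation into a probability via the binarity observation that, conditioned on any prefix, the next coordinate is a single bit, so either $s_j\equiv 0$ (determined) or $s_j>0$ a.s.\ with $\Exp[s_j\mid a]\le 1$; in both cases $\Pr[s_j>0\mid a]\ge\Exp[s_j\mid a]$. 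Averaging over prefixes and $j$ gives $\Pr[s_{\bm j}=0]\le k/n$, and the hockey-stick identity $\frac{1}{n}\sum_{j=1}^n\binom{j-1}{q}/\binom{n-1}{q}=\frac{1}{q+1}$ (valid since, conditional on $\bm\pi(\bm j)=\bm i$ and $\bm j=j$, the prefix is a uniform $(j-1)$-subset of $[n]\setminus\{\bm i\}$) links the certificate event to $s_{\bm j}=0$, giving $\Pr[\text{cert}]\le(q+1)k/n$. This is cleaner and tighter than what one obtains by specializing the paper's block machinery; the price is that the crucial step $\Pr[s_j>0\mid a]\ge\Exp[s_j\mid a]$ leans on each coordinate being a single bit, which is exactly what breaks when one tries to push this argument to blocks of size $r>1$ (there the conditional entropy can be up to $r$, so $1\ge H$ no longer holds), explaining why the paper needs heavier machinery for \cref{thm:unpred}.

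All the steps I checked are sound: the chain-rule identity, the casewise inequality $\Pr[s_j>0\mid a]\ge\Exp[s_j\mid a]$, the fact that $(\bm x,\bm\pi(\bm j))$ has the target product distribution, the conditional uniformity of $(\bm\pi,\bm j)$ given $\bm\pi(\bm j)=\bm i$, the implication ``$Q^*\subseteq\{\bm\pi(1),\dots,\bm\pi(\bm j-1)\}\Rightarrow s_{\bm j}=0$,'' and the hockey-stick evaluation. (One small remark: if the witnessing certificate has size $q'<q$, the probability is $\frac{1}{q'+1}\ge\frac{1}{q+1}$, so the bound only improves.)
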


Let us apply this lemma to show a $2^{\Omega(\sqrt{n})}$ depth-3 lower bound for the $n$-bit parity function $\Xor$. Suppose for the sake of contradiction that $\Sigma$ is a depth-3 circuit of size $|\Sigma|=2^{o(\sqrt{n})}$ for $\Xor$. We may assume that $\Sigma$ is of type $\lor\circ\land\circ\lor$, that is, the circuit has an $\lor$-gate at the top, followed by a layer of $\land$-gates, then a layer of $\lor$-gates, and finally we have the input literals at the bottom. Assume for simplicity of exposition that the bottom fanin of $\Sigma$ is at most $\sqrt{n}$. (If we are allowed to invoke a bottom-up trick, then this bottom-fanin assumption can be easily ensured by restricting a small fraction of input variables~\cite[Lemma~3.2]{Hastad1995}.) That is, $\Sigma$ computes an $\Or$ of $\sqrt{n}$-CNFs. Because the top gate is $\lor$, we have $\Xor^{-1}(1)=\bigcup_{j\in [s]} \Pi_i^{-1}(1)$ where $s\leq|\Sigma|$ is the top fanin and $\Pi_j$ is the $j$-th CNF feeding into the top gate. We now take a naive greedy step down the circuit: we choose any $j$ that maximises $|\Pi_j^{-1}(1)|$ and set $\Pi\coloneqq\Pi_j$.
In summary,
\begin{itemize}
    \item $\Pi$ accepts the set~$X\coloneqq\Pi^{-1}(1)$ of density $|X|/2^n\geq |\Xor^{-1}(1)|/(2^n|\Sigma|)\geq 1/(2|\Sigma|) \geq 2^{-o(\sqrt{n})}$.
    \item $\Pi$ rejects the set $\Xor^{-1}(0)$.
\end{itemize}

We can now apply \cref{lem:bit-unpred} to the set $X$ with parameters $k\coloneqq o(\sqrt{n})$ and $q\coloneqq \sqrt{n}$. As a result, there exist some string $x^*\in X$ and a coordinate $i\in[n]$ such that $x^*$ does not contain any size-$q$ certificates for $i$ wrt~$X$. Consider the string $y\in\Xor^{-1}(0)$ that is obtained from $x^*$ by flipping the $i$-th bit. We claim that~$y$ is locally indistinguishable from strings in $X$---often $y$ is called a \emph{local limit} of $X$---in the following sense.%
\begin{claim}[Local limit] \label{clm:local}
For every $Q\subseteq[n]$, $|Q|\leq\sqrt{n}$, there exist an $x \in X$ such that $x_Q=y_Q$.
\end{claim}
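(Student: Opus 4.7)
The plan is a direct case analysis driven by the choice of $x^*$ and $i$ from the unpredictability lemma. If $i\notin Q$, then $y$ agrees with $x^*$ on $Q$ (since $y$ differs from $x^*$ only in coordinate $i$), so $x \coloneqq x^* \in X$ already witnesses the claim.

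The substantive case is $i\in Q$. I would write $Q = Q'\cup\{i\}$ with $Q' \coloneqq Q\setminus\{i\}$, so $|Q'|\le \sqrt n -1$, and $y_{Q'}=x^*_{Q'}$ while $y_i = 1-x^*_i$. The goal is to produce some $x\in X$ with $x_{Q'}=x^*_{Q'}$ and $x_i = 1-x^*_i$. Suppose toward a contradiction that no such $x$ exists. Then every $x\in X$ extending $x^*_{Q'}$ must have $x_i = x^*_i$, which is precisely the statement that $(Q', x^*_{Q'})$ is a certificate for $i$ with respect to $X$ (predicting the bit $b = x^*_i$). Moreover, $x^*$ trivially contains this certificate, because the partial assignment defining it is literally $x^*_{Q'}$.

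The only minor hiccup is that the certificate has size $|Q'|\le \sqrt n - 1$ while the lemma speaks of size-$q$ certificates with $q=\sqrt n$. This is resolved by padding: pick any $S\subseteq[n]\setminus(Q'\cup\{i\})$ with $|S| = q - |Q'|$, and consider the pair $(Q'\cup S,\, x^*_{Q'\cup S})$. Any specialisation of a certificate is still a certificate, so this is a size-$q$ certificate for $i$ that is contained in $x^*$, contradicting the choice of $x^*$ as a string containing no size-$q$ certificate for $i$ wrt $X$.

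The main obstacle is essentially none: the content of the claim is already packaged into the unpredictability-lemma choice of $(x^*,i)$. The argument is purely combinatorial — a two-line case split plus a padding remark — and the role of $q=\sqrt n$ is exactly to absorb an arbitrary query set $Q$ of size $\le\sqrt n$ after excising the one coordinate $i$.
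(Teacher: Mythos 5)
Your proof is correct and follows the same route as the paper: remove $i$ from $Q$, observe that a counterexample would make $(Q',x^*_{Q'})$ a certificate for $i$, and conclude from the choice of $x^*$. The explicit padding to a certificate of size exactly $q$ (and the $i\notin Q$ case split) are small clarifications the paper leaves implicit, but they do not change the argument.
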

\begin{proof}
Let $Q'\coloneqq Q\setminus\{i\}$. Since $(Q',x^*_{Q'})$ is not a certificate for $i$ wrt $X$, we get that, for the bit $b\coloneqq 1-x^*_i$, there exist some $x\in X$ such that $x_{Q'}=x^*_{Q'}$ and $x_i=b$. But this means $x_Q=y_Q$ (see~\cref{fig:enter-label}).
\end{proof}

We finally claim that $\Pi$ accepts $y\in\Xor^{-1}(0)$ which is the desired contradiction. To show this, we need to show that every clause of $\Pi$ accepts $y$. Consider any clause $\Lambda$ of $\Pi$ and let $Q\subseteq[n]$, $|Q|\leq\sqrt{n}$, be the set of variables mentioned in $\Lambda$. By \cref{clm:local} there is some $x\in X$ such that $\smash{x_Q=y_Q}$. But since $\Lambda$ accepts $x$, it must also accept $y$, as desired. This concludes the proof that $\Xor$ requires depth-3 circuits of size $2^{\Omega(\sqrt{n})}$.

\begin{figure}
    \centering
    \begin{tikzpicture}
    \def\elements{{
            {1, 1, 0, 1, 1},
            {0, 1, 0, 1, 0},
            {1, 1, 1, 1, 1},
            {0, 1, 1, 1, 0},
            {1, 1, 1, 0, 1},
            {0, 1, 0, 1, 1},
            {1, 1, 1, 0, 1},
            {0, 0, 1, 0, 0},
            {1, 1, 1, 1, 1}
        }}
    \def\result{{1, 0, 1, 1, 1, 1, 1, 0, 1}} 
    \def\n{8}
    \def\m{4}
    \def\step{0.65}

    \draw[step = \step, thick] (0, 0) grid ({\step * (\n + 1)}, {-\step * (\m + 1)});
    \draw[thick, blue!30!black, rounded corners = 2pt, fill = blue, fill opacity = 0.5]
        (3 * \step, {-\step * (\m + 1)}) rectangle (4 * \step, {-\step * \m});
    \draw[thick, green!30!black, rounded corners = 2pt, fill = green, fill opacity = 0.2]
        (-0.02, {-\step * (\m + 1) - 0.02}) rectangle ({\step * (\n + 1) + 0.02}, {-\step * \m + 0.02});

    \foreach \i in {0, 1, ..., \n}{
        \foreach \j in {0, 1, ..., \m}{
            \node at ({\step / 2 + \i * \step}, {-\step / 2 - \j * \step})
                {\pgfmathparse{\elements[\i][\j]}\pgfmathresult};
        }
    }

    \begin{scope}[shift = {(0, {-\step * (\m + 1) - 1})}]
        \draw[step = \step, thick] (0, 0) grid ({\step * (\n + 1)}, -\step);
        \draw[thick, blue!30!black, rounded corners = 2pt, fill = blue, fill opacity = 0.5]
            (3 * \step, 0) rectangle (4 * \step, -\step);
            
        \foreach \i in {0, 1, ..., \n}{
            \node at ({\step / 2 + \i * \step}, {-\step / 2})
                {\pgfmathparse{\result[\i]}\pgfmathresult};
        }
    \end{scope}

    \node at (-1.5, {-\step * (\m + 1) / 2}) {\Large $X \coloneqq$};
    \node at (-1.5, {-\step * (\m + 1.5) - 1}) {Local limit:};

    \node at ({\step * (\n + 1) + 2.8}, {-\step * (\m + 1) / 2}) {$\Pi$ accepts all $X$};
    \node at ({\step * (\n + 1) + 2.8}, {-\step * (\m + 1.5) - 1}) {$\Pi$ also accepts local limit};
\end{tikzpicture}
    \caption{Flipping one bit yields a local limit}
    \label{fig:enter-label}
\end{figure}

\subsection{Depth-4 via block unpredictability}

Our depth-4 lower bound will encounter a number of new challenges compared to the depth-3 proof above.

\paragraph{Challenge 1: Which subcircuit to pick?}
The first challenge is that we have to be more careful how to take steps down the circuit.
Suppose for contradiction $\Pi$ is a small depth-4 circuit of type $\land\circ\lor\circ\land\circ\lor$
computing $\Xor$. Our first step will still be naively greedy: we choose the depth-3 subcircuit
$\Sigma$ of $\Pi$ that rejects as large a set $Y\subseteq \Xor^{-1}(0)$ as possible. The second
step is trickier. Intuitively, we should choose a depth-2 subcircuit (namely, a CNF) of $\Sigma$
that accepts not just many 1-inputs but in particular such 1-inputs that are hard to distinguish
from $Y$. To this end, we first construct a set $M\subseteq\Xor^{-1}(1)$ that~\textbf{{\itshape mirrors}}
the set $Y\subseteq\Xor^{-1}(0)$ in the sense that $M$ is hard to locally distinguish from $Y$. Concretely,
let us define (as a first attempt; to be adjusted later) the mirror set $M$ as the set of $1$-inputs
$x$ such that the $n^{1/3}$-radius Hamming ball around $x$ contains many points from $Y$. Our second
step down the circuit will then be to choose any CNF $\Gamma$ that accepts a large fraction of the
mirror set $M$. We illustrate the process in~\cref{fig:steps}.

\begin{figure}
    \centering
    \tikzset{
    vert/.style = {
        circle,
        thick,
        minimum size = 0.5cm
    },
    fancy-arrow/.style = {
        draw = black,
        thick,
        single arrow,
        right color = green!40!black,
        left color = red!40!black,
        fill opacity = 0.7,
        single arrow head extend = 0.3cm,
        single arrow tip angle = 70,
        single arrow head indent = 0.15cm,
        minimum height = 1.4cm,
        minimum width = 1cm,
        shading angle = -45,
        rotate = -90
    },
    common-part/.pic = {
        \draw[thick] (0, 0) rectangle (\a, \a);
        \draw[thick] (0, 0) -- (\a, \a);

        \node at (0.5, \a - 0.5) {\Large $0$};
        \node at (\a - 0.5, 0.5) {\Large $1$};

        \foreach \i in {1, 2, ..., 5}{
            \draw[thick, ->] (a) -- (b\i);
        }
    },
}
    
\begin{tikzpicture}[>=latex]
    \def\a{4}

    \fill[red!70!black, opacity = 0.3] (0, 0) -- (\a, \a) -- (\a, 0) -- cycle;
    \fill[green!50!black, opacity = 0.3] (0, 0) -- (\a, \a) -- (0, \a) -- cycle;

    \begin{scope}[shift = {(9, 0)}]
        \node[draw, vert, color = green!40!black, fill = green!20] (a) at (0, \a - 0.5)
            {\textcolor{black}{$\land$}};

        \node[draw, vert] (b1) at (-3, \a - 2.5) {$\lor$};
        \node[draw, vert] (b2) at (-1.5, \a - 2.5) {$\lor$};
        \node[vert] (b3) at (0, \a - 2.5) {};
        \node[vert] (b4) at (1.5, \a - 2.5) {};
        \node[draw, vert] (b5) at (3, \a - 2.5) {$\lor$};

        \foreach \i in {1, 2, ..., 5}{
            \draw[thick, ->] (a) -- (b\i);
        }

        \draw[<->, thick, blue] (0, \a - 0.5) ++(-165:0.9) arc (-160:-15:0.9) coordinate (c);
        \draw (c) ++(0.5, 0) node {\textcolor{blue}{$2^{n^{\varepsilon}}$}};
    \end{scope}

    \pic {common-part};
    \node[fancy-arrow] at (6.25, -1) {};

    \begin{scope}[shift = {(0, -\a - 2.1)}]
        \fill[red!70!black, opacity = 0.3] (0, 0) -- (\a, \a) -- (\a, 0) -- cycle;
        \draw[green!50!black, thick, fill = green!50!black, fill opacity = 0.3]
            (1.1, 1.2) -- (2.2, 2.3) to[out = 45, in = 40] (1.3, 2.3) to[out = 220, in = 225] cycle;
        \node[green!50!black] at (1.3, 3) {\small $\mathrm{H}(Y) = n - n^{\varepsilon}$};
        \node[green!50!black] at (1.5, 2) {\small $Y$};
        \draw[->, green!50!black, thick] (0.5, 2.8) to[out = -90, in = 140] (1, 2.1);

        \begin{scope}[shift = {(9, 0)}]
            \node[draw, vert] (a) at (0, \a - 0.5) {$\land$};

            \node[draw, vert] (b1) at (-3, \a - 2.5) {$\lor$};
            \node[draw, vert, color = green!40!black, fill = green!20] (b2) at (-1.5, \a - 2.5)
                {\textcolor{black}{$\lor$}};
            \node[vert] (b3) at (0, \a - 2.5) {};
            \node[vert] (b4) at (1.5, \a - 2.5) {};
            \node[draw, vert] (b5) at (3, \a - 2.5) {$\lor$};

            \draw[<->, thick, blue] (0, \a - 0.5) ++(-165:0.9) arc (-160:-15:0.9) coordinate (c);
            \draw (c) ++(0.5, 0) node {\textcolor{blue}{$2^{n^{\varepsilon}}$}};
        \end{scope}

        \pic {common-part};
        \node[fancy-arrow] at (6.25, -1) {};
    \end{scope}

    \begin{scope}[shift = {(0, -2 * \a - 4.2)}]
        \draw[green!50!black, thick, fill = green!50!black, fill opacity = 0.3]
            (1.1, 1.2) -- (2.2, 2.3) to[out = 45, in = 40] (1.3, 2.3) to[out = 220, in = 225] cycle;
        \node[green!50!black] at (1.5, 2) {\small $Y$};

        \draw[red!70!black, thick, fill = red!70!black, fill opacity = 0.3]
            (1.1, 1) -- (2.2, 2.1) to[out = 45, in = 40] (2.3, 0.9) to[out = 220, in = 225] cycle;
        \node[red!70!black] at (1.6, 1) {\small $M$};

        \begin{scope}[shift = {(9, 0)}]
            \node[draw, vert] (a) at (0, \a - 0.5) {$\land$};

            \node[draw, vert] (b1) at (-3, \a - 2.5) {$\lor$};
            \node[draw, vert, color = green!40!black, fill = green!20] (b2) at (-1.5, \a - 2.5)
                 {\textcolor{black}{$\lor$}};
            \node[vert] (b3) at (0, \a - 2.5) {};
            \node[vert] (b4) at (1.5, \a - 2.5) {};
            \node[draw, vert] (b5) at (3, \a - 2.5) {$\lor$};

            \draw[<->, thick, blue] (0, \a - 0.5) ++(-165:0.9) arc (-160:-15:0.9) coordinate (c);
            \draw (c) ++(0.5, 0) node {\textcolor{blue}{$2^{n^{\varepsilon}}$}};
        \end{scope}

        \pic {common-part};
        \node[fancy-arrow] at (6.25, -1) {};
    \end{scope}

    \begin{scope}[shift = {(0, -3 * \a - 6.3)}]
        \draw[green!50!black, thick, fill = green!50!black, fill opacity = 0.3]
            (1.1, 1.2) -- (2.2, 2.3) to[out = 45, in = 40] (1.3, 2.3) to[out = 220, in = 225] cycle;
        \node[green!50!black] at (1.5, 2) {\small $Y$};

        \draw[red!70!black, thick]
            (1.1, 1) -- (2.2, 2.1) to[out = 45, in = 40] (2.3, 0.9) to[out = 220, in = 225] cycle;
            
        \draw[red!70!black, thick, fill = red!70!black, fill opacity = 0.3]
            (1.4, 1.1) -- (1.8, 1.6) to[out = 45, in = 40] (1.7, 1.1) to[out = 220, in = 225] cycle;

        \begin{scope}[shift = {(9, 0)}]
            \node[draw, vert] (a) at (0, \a - 0.5) {$\land$};

            \node[draw, vert] (b1) at (-3, \a - 1.5) {$\lor$};
            \node[draw, vert] (b2) at (-1.5, \a - 1.5) {$\lor$};
            \node[vert] (b3) at (0, \a - 1.5) {};
            \node[vert] (b4) at (1.5, \a - 1.5) {};
            \node[draw, vert] (b5) at (3, \a - 1.5) {$\lor$};

            \node[draw, rectangle, rounded corners = 2pt, color = green!40!black, fill = green!20]
                (c1) at (-3, \a - 3.5) {\textcolor{black}{CNF}};
            \node[draw, rectangle, rounded corners = 2pt] (c2) at (-1.5, \a - 3.5) {CNF};
            \node[draw, rectangle, rounded corners = 2pt] (c3) at (0, \a - 3.5) {CNF};

            \foreach \i in {1, 2, 3}{
                \draw[thick, ->] (b2) -- (c\i);
            }
        \end{scope}

        \pic {common-part};
    \end{scope}
\end{tikzpicture}
    \caption{Choosing a mirror set}
    \label{fig:steps}
\end{figure}

\begin{remark}[Completeness of the top-down method] \label{rem:complete}
Choosing the mirror set $M$ is the crux of the top-down argument. It follows from linear programming duality that there always exist a mirror set such that any subcircuit that accepts a large enough fraction of $M$ will make a mistake further down the circuit. This means that if we only knew how to construct mirror sets, we could prove a tight lower bound (up to polynomial factors) for constant-depth circuits for any boolean function. We refer to \cite{Hirahara2017,Goos2023} for more formal discussion of the completeness of the top-down method.
\end{remark}

\paragraph{Challenge 2: Block unpredictability.}
Given a CNF $\Gamma$ that accepts a large subset $X\subseteq M$, we would next like to show the existence of a local limit $y\in\Xor^{-1}(0)$ of $X$. To reach a contradiction, it is important that the local limit lies in the target set $Y$; note that $\Gamma$ is only guaranteed to reject inputs in $Y$, not every input in~$\Xor^{-1}(0)$. Our guarantee about $X\subseteq M$ is that every $x\in X$ contains many points in $Y$ at Hamming distance at most $n^{1/3}$. Thus, starting from $x\in X$ and trying to reach a point in $Y$, we may need to flip a whole block of at most $n^{1/3}$ bits. Meir and Wigderson~\cite{Meir2019} already generalised their bit unpredictability lemma to larger blocks of bits (essentially by iteratively applying their lemma for a single bit). However, their generalisation did not supply unpredictable blocks with high enough probability. In this paper, we extend their proof and give a block unpredictability lemma with essentially optimal parameters.

To state our lemma, we first generalise the definition of certificates to whole blocks of bits. Let $X\subseteq \bool^n$ be a set of $n$-bit strings and $R\in \smash{\binom{[n]}{r}}\coloneqq\{ A\subseteq[n]: |A|=r\}$ a block of $r$ coordinates. A \emph{certificate for $R$} (wrt~$X$) is a pair~$(Q,a)$ such that $Q \subseteq [n] \setminus R$, $a \in \{0,1\}^Q$ and there exists $b \in \{0,1\}^R$ such that every~$x \in X$ with $x_Q = a$ satisfies~$x_R \neq b$. In words, the partial assignment defined by $(Q,a)$ will correctly predict that some $r$-bit string $b$ is missing over the coordinates $R$. We prove the following in \cref{sec:block-unpred}.
 
\begin{restatable}[Block unpredictability]{lemma}{Unpred}
\label{thm:unpred}
Let $X\subseteq\{0,1\}^n$ have density $|X|/2^n\geq 2^{-k}$. Then for any $r,q\geq 1$,
\begin{equation} \label{eq:sat-prob}
\Pr_{(\x,\R)\sim X\times  \binom{[n]}{r}}[\, \text{$\x$ contains a size-$q$ certificate for $\R$ wrt $X$}\,] 
~\leq~
O(kqr/n)^{1/6}.
\end{equation}
\end{restatable}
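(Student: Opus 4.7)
The plan is to reduce the block statement to the bit statement (\cref{lem:bit-unpred}) via a conditioning argument that incurs a power loss, which should account for the $1/6$ exponent. The elementary reduction driving everything is the following observation: if $(Q,\x_Q)$ is a size-$q$ certificate for a block $\R$ wrt $X$ with forbidden pattern $b\in\{0,1\}^{\R}$, then for every $i\in\R$, the augmented pair $(Q\cup(\R\setminus\{i\}),\,\x_Q\cup \bar b_{\R\setminus\{i\}})$ is a size-$(q+r-1)$ certificate for the \emph{bit} $i$ wrt $X$. Indeed, any $x\in X$ that matches $\x_Q$ on $Q$ and $\bar b_{\R\setminus\{i\}}$ on $\R\setminus\{i\}$ must, by the block certificate, have $x_{\R}\neq b$, which forces $x_i=\bar b_i$. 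Thus every block-level certificate yields an associated bit-level certificate, but only ``witnessed'' by the $x\in X$ that happen to agree with $\bar b_{\R\setminus\{i\}}$ on $\R\setminus\{i\}$---an event of density at most $2^{-(r-1)}$ a priori.

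To work around this witness scarcity, I would apply \cref{lem:bit-unpred} not to $X$ itself but slab-by-slab. For each $(r{-}1)$-subset $S\subseteq[n]$ and each $c\in\{0,1\}^S$, consider the slab $X_{S,c}\coloneqq\{x\in X : x_S=c\}$, viewed as a subset of the $(n{-}r{+}1)$-cube $\{0,1\}^{[n]\setminus S}$. On average over $c$, these slabs have density $|X|/2^n \geq 2^{-k}$, so that feeding \cref{lem:bit-unpred} into a typical slab gives a bit-certificate bound of the form $O((k+r)(q+r)/n)$. Sampling $\ib\in\R$ uniformly and pairing the slab parameter $c$ with the complementary forbidden pattern $\bar b_{\R\setminus\{\ib\}}$ that comes from the block certificate, one can then translate a block-level bad pair $(\x,\R)$ into a bit-level bad pair $(\x,\ib)$ for the slab $X_{\R\setminus\{\ib\},\bar b_{\R\setminus\{\ib\}}}$.

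The main obstacle is the aggregation. Slab densities vary wildly: only a fraction of slabs are dense enough for \cref{lem:bit-unpred} to give a useful bound, and only a fraction of bad block-pairs $(\x,\R)$ land inside such dense slabs. Quantifying this via a H\"older- or Cauchy--Schwarz-type concavity inequality is the source of the power loss. The proof will have to balance three competing penalties---the density drop from restricting to a slab (a factor of $2^{r-1}$), the certificate-size inflation ($q \mapsto q+r-1$), and the concavity loss from averaging over $c$---and optimising these should yield the sixth-root bound $O(kqr/n)^{1/6}$. I expect the hardest step to be the concavity argument itself: getting any polynomial-in-$r$ decay (rather than the catastrophic $2^r$ one suffers from the naive ``fix $\bar b$ and condition'' approach) is what is new here, and is almost certainly where the particular exponent $1/6$ comes from rather than a cleaner value like $1$.
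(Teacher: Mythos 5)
There is a genuine gap in the reduction, and it appears before the concavity issue you flag as the ``hardest step.''

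First, a sign error that is symptomatic of the deeper problem: the augmented pair you should be considering is $\bigl(Q\cup(R\setminus\{i\}),\ \x_Q\cup b_{R\setminus\{i\}}\bigr)$, \emph{not} $\bar b_{R\setminus\{i\}}$. Indeed, if $x_{R\setminus\{i\}}=\bar b_{R\setminus\{i\}}$ and $r\ge 2$ then $x_R\neq b$ holds vacuously and the block certificate says nothing about $x_i$; the forcing $x_i=\bar b_i$ only kicks in when $x_{R\setminus\{i\}}=b_{R\setminus\{i\}}$. This matters because it tells you exactly which slab is relevant: $X_{S,c}$ with $S=R\setminus\{i\}$ and $c=b_S$.

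The real problem is that the sampled point $\x$ is essentially never in that slab. By definition of the block certificate, $\x_R\neq b$, so $\x_R$ and $b$ differ in at least one coordinate --- but generically they differ in \emph{many} coordinates, and $\x_{R\setminus\{i\}}=b_{R\setminus\{i\}}$ requires them to differ in \emph{exactly one}. A concrete illustration: take $X=\{x : x_B\neq 0^B\}$ for a fixed block $B$ of size $r$. For $R=B$ the empty pair $(\emptyset,\emptyset)$ is a block certificate with the unique forbidden pattern $b=0^B$, yet a typical $\x\in X$ has $\x_B$ at Hamming distance $\gg 1$ from $b$, so $\x$ lies in none of the slabs $X_{R\setminus\{i\},\,b_{R\setminus\{i\}}}$. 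Your charging map from block-bad pairs $(\x,\R)$ to bit-bad pairs $(\bm z,\ib)$ in slabs therefore cannot take $\bm z=\x$; you would need to route through some other $\bm z\in X_{S,c}$ agreeing with $\x$ on $Q$, and no such routing (let alone a measure-preserving one, or one with controlled multiplicity) is described. The ``witness scarcity'' paragraph correctly identifies that most $x$ are not witnesses, but the proposed slab-by-slab fix does not address the fact that the specific $\x$ you drew is not a witness either, so the bit lemma applied to the slab gives no information about $(\x,\R)$.

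The paper avoids this entirely by conditioning on $\x$'s \emph{own} coordinates rather than on a certificate-dependent pattern $b$. It samples a superset $\T\supseteq\R$ of size $t\gg r$ and works in the slab $\{z\in X : z_{[n]\setminus\T}=\x_{[n]\setminus\T}\}$, which always contains $\x$. It then replaces bit unpredictability with a \emph{shattering} statement (via Pajor's version of Sauer--Shelah and an entropy argument): if this slab shatters $\R$, then \emph{no} pattern on $\R$ is missing for any fixing of $Q\subseteq\T$, which simultaneously rules out all bit- and block-level certificates disjoint from $\T$; a separate short argument (the paper's Claim 4) handles certificates that intersect $\T$. Shattering is the right ``no certificate for a block'' condition here, and it is not recovered by iterating or conditioning the single-bit lemma in the way you propose. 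If you want to pursue a reduction-style proof, the place to start is to make the slab depend on $\x$ rather than on $b$, at which point you are essentially reconstructing the paper's argument.
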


\begin{remark}[Optimality of \cref{thm:unpred}]
We note that \cref{thm:unpred} is optimal in that if $kqr=\Omega(n)$ then there are examples of sets $X$ such that the probability \eqref{eq:sat-prob} is $\Omega(1)$. Let $B_1,\ldots,B_k \subseteq [n]$, $|B_i|=q$, be any pairwise disjoint sets. Consider $X\coloneqq\{x\in\{0,1\}^n :\forall i\in[k],\ \Xor_q(x_{B_i})=0 \}$ of density $|X|/2^n\geq 2^{-k}$. We claim that for every $x\in X$, if we choose $\smash{\R\sim\binom{[n]}{r}}$, then $\Pr[ \text{$x$ contains a size-$q$ certificate for $\R$ wrt $X$}] \geq \Omega(1)$. Indeed, by the birthday paradox, the random block $\R$ will intersect some $B_i$ with probability $\Omega(1)$ and in that case there is a certificate with $Q\coloneqq B_i\setminus \R$ that can predict the parity of $x_{B_i\cap\R}$.
\end{remark}

\paragraph{Challenge 3: Flipping into target set.}
Using block unpredictability, we may now attempt to construct a local limit $y$ of $X\subseteq M$ that lies in the target set $Y$. Suppose we apply the block unpredictability for $X$ to find an input $x\in X$ such that whp over $\R\sim\binom{[n]}{r}$ (where $r\approx n^{1/3}$) we have that any $y$ obtained from $x$ by flipping any subset of bits in $\R$ is a local limit. Can we reach a point in $Y$ via some such block flip?

Let us reformulate this question as follows. Consider the set $\As\coloneqq Y-x \coloneqq\{y-x :y\in Y\}$ (where the arithmetic is that of $\mathbb{Z}_2^n$) of indicator vectors of block-flips that will land us in $Y$ starting from $x$. If we think of $\As$ as a set family $\As \subseteq 2^{[n]}$, then a successful block-flip exists with high probability iff (for $p\coloneqq r/n$)
\begin{equation}\label{eq:sat-def}
\Pr_{\R\sim \binom{[n]}{pn}}[\exists A\in \As\colon A \subseteq \R]
~\geq~ 1 - \epsilon.
\end{equation}
Following Rossman~\cite{Rossman2014}, a set family $\As \subseteq 2^{[n]}$ with property \eqref{eq:sat-def} is called \emph{$(p,\epsilon)$-satisfying}. More generally,~$\As$ is called a \emph{$(p, \epsilon)$-robust sunflower} if, for the \emph{kernel} $K\coloneqq \bigcap_{A\in\As} A$, the family $\{ A\setminus K:A\in\As\}$ is $(p, \epsilon)$-satisfying. An exciting line of work~\cite{Rossman2014,Alweiss2021,Rao2020} has shown that every set family that is ``locally dense''---has many sets of size $\ll r$---contains a robust sunflower for $p=r/n$. These works suggest a strategy for us: we define (for real, this time) our mirror set $M\subseteq \Xor^{-1}(1)$ not as the set of strings $x$ such that $Y-x$ is locally dense, but we instead include in $M$ strings corresponding to kernels of robust sunflowers found inside $Y-x$. That is, instead of including the original locally dense point $x$, we include a nearby point~$x'$ (where $x'-x$ is the kernel) where we also make sure that $x'\in\Xor^{-1}(1)$. This way, we ensure that every $x\in M$ is $(r/n,\epsilon)$-satisfying, and this allows us to find a local limit in $Y$.

A useful sufficient condition for a set family $\As$ to be $(p,\epsilon)$-satisfiable is that no set $I\subseteq[n]$ appears too often as a subset of a randomly chosen $\Ab\sim\As$. Formally, we say $\As$ is \emph{$\kappa$-spread} if for every $I \subseteq [n]$,
\begin{equation}\label{eq:spread}
\Pr_{\Ab\sim\As}[I\subseteq A] ~\leq~ \kappa^{-|I|}.
\end{equation}%
\vspace{-1.5em}
\begin{lemma}[{\cite[Lemma~4]{Rao2020}}]
\label{lem:satisfy}
There exists a universal constant $C > 0$ such that the following holds.
    Suppose $\emptyset \neq \As \subseteq \binom{[n]}{r}$ is $\kappa$-spread for
    $\kappa = (C / p) \log(r / \epsilon)$. Then $\As$ is $(p, \epsilon)$-satisfying.
\end{lemma}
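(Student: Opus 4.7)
The plan is to use the encoding-based proof of the Alweiss--Lovett--Wu--Zhang spread/sunflower lemma as presented in \cite{Rao2020}. First I would reduce from the fixed-size sample $\R \sim \binom{[n]}{pn}$ to a $p'$-biased Bernoulli sample, i.e., a random set $W \subseteq [n]$ in which each coordinate is included independently with probability $p'$, via a standard concentration coupling; the constant-factor loss in $p$ is absorbed into $C$. It then suffices to prove: if $\As$ is $\kappa$-spread with $\kappa$ as stated, then $\Pr_{W}[\exists A \in \As : A \subseteq W] \geq 1-\epsilon$.

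I would argue by contradiction, assuming this covering probability is $<1-\epsilon$. Sample $\Ab \sim \As$ uniformly and estimate its Shannon entropy two ways. The spread hypothesis gives the lower bound $\entropy(\Ab) \geq r \log \kappa - O(r)$: reveal the elements of $\Ab$ one at a time along a fixed ordering of $[n]$, using the spread inequality $\Pr[i \in \Ab \mid \text{prefix}] \leq 1/\kappa$ at each new element. For a matching upper bound, thin $W$ into $t = \Theta(\log(r/\epsilon))$ independent $(p'/t)$-biased samples $W_1,\ldots,W_t$ whose union is stochastically dominated by $W$, and encode $\Ab$ round by round: in round $i$ record which coordinates of $\Ab$ first appear in $W_i$ (cheap, since they are already revealed by $W_i$ itself), and at the end list the residual $\Ab \setminus \bigcup_i W_i$ explicitly at $\log n$ bits per element.

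The technical core is to show that under the failure hypothesis each round reduces the expected uncovered size of $\Ab$ by a constant factor, so after $t$ rounds the expected residual is $o(r)$ and the total encoding uses strictly fewer than $r \log \kappa$ bits, contradicting the entropy lower bound. Here spread must be re-applied to the conditional-on-failure distribution of $\Ab$. The main obstacle I expect is exactly this conditional-spread bookkeeping: conditioning on a failure event of probability $\geq \epsilon$ weakens the spread constant by a factor of $1/\epsilon$, and getting the round count $t$, the per-round thinning, and the stochastic-domination estimates to balance so that the net encoding saving exceeds $r \log \kappa$ is what pins down the constant $C$ and produces the $\log(r/\epsilon)$ factor in $\kappa$.
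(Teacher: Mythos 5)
The paper does not prove this lemma itself: it is quoted verbatim as Lemma~4 of Rao's \emph{Coding for Sunflowers} \cite{Rao2020}, so there is no in-paper argument to compare against. Your sketch does correctly identify the high-level shape of the Alweiss--Lovett--Wu--Zhang/Rao/Tao argument: pass to a Bernoulli sample, thin it into $t=\Theta(\log(r/\epsilon))$ rounds, argue a constant-factor per-round shrinkage of the uncovered part, and get the entropy lower bound $\Omega(r\log\kappa)$ from $\kappa$-spreadness.

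However, the encoding step as you state it does not close, and it is exactly the place where Rao's proof has its one real idea. You propose to encode $\Ab$ round by round by recording which coordinates of $\Ab$ first appear in $W_i$, calling this ``cheap, since they are already revealed by $W_i$ itself.'' This conflates ``the decoder knows $W_i$'' with ``the decoder knows $\Ab\cap W_i$.'' Telling the decoder which elements of $W_i$ belong to $\Ab$ costs about $|\Ab\cap W_i|\log\bigl(|W_i|/|\Ab\cap W_i|\bigr)$ bits, which summed over rounds is roughly $r\log(pn/r)$. For this to beat the budget $r\log\kappa = r\log\bigl(\tfrac{C}{p}\log\tfrac{r}{\epsilon}\bigr)$ you would need $p^2 n \lesssim r\log(r/\epsilon)$, which is false in much of the parameter range where the lemma is meant to apply. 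So the naive ``reveal $\Ab\cap W$'' encoding loses rather than saves bits. The actual ALWZ/Tao/Rao argument does not try to recover the sampled $\Ab$ at all: each round replaces the current set by a \emph{different} $A'\in\As$ whose residual $A'\setminus W$ is small, and the crux (Rao's encoding lemma) is a counting/spread bound showing that the number of such candidate $A'$, given the already-revealed data, is small enough that specifying one is cheap. That witness-replacement step, and the spread-based bound on the number of candidates, is the missing ingredient; without it the round-by-round entropy bookkeeping you outline cannot yield a contradiction.
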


\section{Proof of \cref{thm:main}}

Let $m \coloneqq n^{1/3}$. Suppose for the sake of contradiction that $\Pi$ is a depth-4 circuit of size $|\Pi|\leq 2^{m^{1-\Omega(1)}}$ that computes the $n$-bit $\Xor$ function. We may assume that $\Pi$ is of type $\land\circ\lor\circ\land\circ\lor$, that is, it has an $\land$ gate at the top, the layers below alternate between $\lor$ and $\land$, and there are input literals at the bottom. Starting at the top gate, we take steps down the circuit to reach a contradiction. 

\subsubsection*{First step}
We have $\Xor^{-1}(0) = \bigcup_{i\in[s]} \Sigma_i^{-1}(0)$ where $s\leq |\Pi|$ is the top fanin and $\Sigma_i$ is the $i$-th subcircuit (type $\lor\circ\land\circ\lor$) feeding into the top gate. We now choose any $i\in[s]$ that maximises $|\Sigma_i^{-1}(0)|$, and set $\Sigma\coloneqq\Sigma_i$. In summary,
\begin{itemize}
    \item $\Sigma$ accepts the set~$\Xor^{-1}(1)$,
    \item $\Sigma$ rejects the set $Y\coloneqq\Sigma^{-1}(0)$ of density $|Y|/2^n\geq |\Xor^{-1}(0)|/(2^n|\Pi|)\geq 1/(2|\Pi|) \geq 2^{-m^{1-\Omega(1)}}$. 
\end{itemize}

\subsubsection*{Second step}
We denote the sphere of radius $r$ centered at $x\in\{0,1\}^n$ by
$S_r(x) \coloneqq\{ y\in\{0,1\}^n: \dist(x,y)=r\}$
where $\dist(x,y)$ is the Hamming distance between $x$ and $y$. We consider a radius $r= m$ below.

\begin{claim}\label{clm:sphere}
There is a set $Z\subseteq \{0,1\}^n$, $|Z|\geq|Y|/2$, such that for any $x\in Z$ we have local density
\[\textstyle
|S_m(x)\cap Y|/\binom{n}{m}
~\geq~ (|Y|/2^n)/2.
\]
\end{claim}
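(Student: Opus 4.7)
The plan is to apply a straightforward averaging argument. Define the ``local density'' function $f(x) \coloneqq |S_m(x) \cap Y|/\binom{n}{m}$ for $x \in \{0,1\}^n$, and let $\mu \coloneqq |Y|/2^n$ be the global density of $Y$. I will take $Z \coloneqq \{x \in \{0,1\}^n : f(x) \geq \mu/2\}$ and show that this $Z$ has the required size.

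First I would compute the average $\Exp_{\x \sim \{0,1\}^n} f(\x) = \mu$. This follows by double counting pairs $(x, y)$ with $y \in Y$ and $\dist(x, y) = m$: since $y \in S_m(x)$ iff $x \in S_m(y)$, we have
\[
\sum_{x \in \{0,1\}^n} |S_m(x) \cap Y| ~=~ \sum_{y \in Y} |S_m(y)| ~=~ |Y| \cdot \binom{n}{m}.
\]
Dividing by $2^n \binom{n}{m}$ gives $\Exp_{\x} f(\x) = \mu$.

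Next I would lower-bound $|Z|$ using the fact that $f$ is bounded in $[0,1]$. Splitting the expectation according to whether $x \in Z$, I get
\[
\mu ~=~ \Exp_\x f(\x) ~\leq~ \Pr_\x[\x \in Z] \cdot 1 + \Pr_\x[\x \notin Z] \cdot (\mu/2) ~\leq~ \Pr_\x[\x \in Z] + \mu/2,
\]
so $\Pr_\x[\x \in Z] \geq \mu/2$, which rearranges to $|Z| \geq (\mu/2) \cdot 2^n = |Y|/2$.

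There is no real obstacle here: the only substantive ingredient is the symmetry of the Hamming sphere used in the averaging step, and the boundedness of $f$ which enables the Markov-style reverse inequality. Note in particular that $Z$ is not required to be contained in $Y$, which is crucial since $\mu$ can be tiny and we could not hope to pack $|Y|/2$ points inside $Y$ with local density noticeably larger than $\mu$.
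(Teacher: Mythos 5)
Your proof is correct and follows essentially the same route as the paper: compute $\Exp_\x[\,|S_m(\x)\cap Y|/\binom{n}{m}\,]$ via the symmetry of the Hamming sphere (the paper phrases this by sampling $\y\sim S_m(\x)$ and observing $\y$ is uniform, which is the same double-counting), then apply the reverse-Markov argument using boundedness of $f$ in $[0,1]$.
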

\begin{proof}
Write $\alpha\coloneqq |Y|/2^n$ and $\delta_x\coloneqq |S_m(x)\cap Y|/\binom{n}{m}$.
Sample a uniform $\x\sim\{0,1\}^n$ and then sample a uniform $\y\sim S_m(\x)$. Note that~$\y$ is uniform in $\{0,1\}^n$ and hence $\Pr[\y \in Y]\geq \alpha$ and moreover $\Exp[\delta_\x]\geq \alpha$. On the other hand $\Exp[\delta_\x] \leq \alpha/2\cdot\Pr[\delta_\x < \alpha/2] + 1\cdot \Pr[\delta_\x \geq \alpha/2]$. These imply $\Pr[\delta_\x \geq \alpha/2] \geq \alpha/2$.
\end{proof}

Let $Z\subseteq \{0,1\}^n$, $|Z|\geq |Y|/2$, be the set obtained from \cref{clm:sphere} so that~$|S_m(x)\cap Y|/\binom{n}{m}\geq 2^{-m^{1-\Omega(1)}}$ for every $x\in Z$. We will consider each $x\in Z$ in turn and apply the following lemma, which moves $x$ to a nearby input $x'$ (of odd parity) such that we have a satisfying set family surrounding $x'$. Recall that~$Y-x'\coloneqq\{y-x':y\in Y\}$ and we may naturally think of it as a set family $Y-x'\subseteq 2^{[n]}$. We prove the following lemma in \cref{sec:spreadify}.
\begin{lemma}\label{lem:spreadify}
Suppose $x$ is locally dense in $Y$ in that $|S_m(x)\cap Y|/\binom{n}{m} \geq 2^{-m^{1-\Omega(1)}}$. Then there is a center $x'\in\Xor^{-1}(1)$ such that $Y-x'$ is $(m^{1+o(1)}/n,o(1))$-satisfying. Moreover, $\dist(x,x')\leq m^{1-\Omega(1)}$.
\end{lemma}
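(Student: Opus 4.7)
The plan is to build a robust sunflower inside the locally dense shell $S_m(x)\cap Y$: the sunflower's kernel $K'$ determines the shift $x' \coloneqq x \oplus K'$, and its spread petals supply the subfamily of $Y - x'$ that witnesses satisfiability. Identifying subsets of $[n]$ with $\bool^n$-vectors, let $\As \coloneqq (S_m(x)\cap Y) - x \subseteq \binom{[n]}{m}$; by hypothesis $|\As|/\binom{n}{m}\geq 2^{-k}$ with $k = m^{1-\Omega(1)}$. Fix a target spread $\kappa \coloneqq n/m^{1+\delta}$ with $\delta = o(1)$ chosen so that \cref{lem:satisfy} turns any $\kappa$-spread family in $\binom{[n]\setminus K}{m-|K|}$ into an $(m^{1+o(1)}/n,o(1))$-satisfying one.

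I would first run the standard iterative kernel extraction. Start with $K_0 \coloneqq \emptyset$; while $\As/K_i \coloneqq \{A\setminus K_i : A\in \As,\ K_i\subseteq A\}$ is not $\kappa$-spread, pick a spread-violating $I_i \subseteq [n]\setminus K_i$ and set $K_{i+1}\coloneqq K_i\cup I_i$. A direct binomial-ratio calculation shows that the relative density $D(K)\coloneqq |\{A\in \As: K\subseteq A\}|/\binom{n-|K|}{m-|K|}$ grows by a factor of at least $(n/(m\kappa))^{|I_i|} = m^{\delta |I_i|}$ at each step. Because $D(\emptyset)\geq 2^{-k}$ and $D\leq 1$, the process halts with some $\kappa$-spread $\As/K$ whose kernel has size $|K| \leq k/(\delta \log m) \leq m^{1-\Omega(1)}$.

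Next I would handle the parity of $x'$. Each $y\in S_m(x)\cap Y$ has $\Xor(y) = 0$ and lies at Hamming distance $m$ from $x$, so $\Xor(x) \equiv m \pmod 2$; hence $\Xor(x\oplus K') = 1$ iff $|K'| \equiv 1-m \pmod 2$. If $|K|$ has the right parity, take $K' \coloneqq K$; otherwise pick some $j_0 \in [n]\setminus K$ with $\Pr_{\As/K}[j_0\in A'] \geq (m-|K|)/(2(n-|K|))$ (which exists by averaging) and set $K' \coloneqq K\cup\{j_0\}$. A Bayes-rule computation then shows that $\As/K'$ is $\kappa/m^{O(\delta)}$-spread. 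Rerunning the extraction with a slightly boosted target $\kappa \cdot m^{O(\delta)}$ (still of the form $n/m^{1+o(1)}$) therefore restores $\kappa$-spreadness of $\As/K'$ while preserving the kernel-size bound. Define $x' \coloneqq x\oplus K'$; by construction $\dist(x,x')\leq m^{1-\Omega(1)}$ and $\Xor(x') = 1$. For every $A\in \As$ with $K'\subseteq A$ we have $A\setminus K' = (y-x)\oplus K' = y-x'$ for the corresponding $y\in S_m(x)\cap Y$, so $\As/K' \subseteq Y-x'$. By \cref{lem:satisfy}, $\As/K'$ is $(m^{1+o(1)}/n, o(1))$-satisfying in the ground set $[n]\setminus K'$, and since $|K'|\ll n$ a standard hypergeometric concentration of $|\R\cap K'|$ transfers the conclusion to $\R\sim\binom{[n]}{pn}$ in the full universe.

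The main obstacle is the parity adjustment: one must show that a single-bit extension of the kernel weakens the spread by only a negligible $m^{o(1)}$ factor, so that the loss can be absorbed by a slightly stronger initial choice of $\kappa$ without blowing past the ceiling $\kappa < n/m$ imposed by the ambient $\binom{[n]}{m}$. Everything else is a direct application of iterative kernel extraction followed by Rao's spread lemma.
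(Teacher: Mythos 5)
Your proposal is correct and follows essentially the same plan as the paper's proof: extract a kernel that spreadifies the locally dense family $(S_m(x)\cap Y)-x$, fix the parity of the center by appending one further coordinate of high conditional density (absorbing the resulting constant-factor loss in spread into the initial choice of target), and apply \cref{lem:satisfy}. The only real stylistic difference is in the kernel extraction: you iterate, merging spread-violating sets into the kernel while tracking the multiplicative growth of the relative density $D(K)$, whereas the paper extracts the single \emph{largest} spread-violating set $I$ in one shot and derives spreadness of the residual family from the maximality of $I$; these are interchangeable, with your density-monotonicity termination argument playing the role of the paper's maximality calculation. One small simplification available to you: since every set in $\As/K'$ is disjoint from $K'$, the spread condition is vacuous for any $J$ meeting $K'$, so you may view $\As/K'$ as a family over the full ground set $[n]$ and invoke \cref{lem:satisfy} directly with $\R\sim\binom{[n]}{pn}$ --- the hypergeometric-transfer step at the end is unnecessary.
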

We define our mirror set to be $M\coloneqq \{x':x\in Z\}\subseteq\Xor^{-1}(1)$ where we obtain each $x'$ by applying \cref{lem:spreadify} to each~$x\in Z$. Each $x'\in M$ could arise from any $x\in Z$ with $\dist(x,x')\leq m^{1-\Omega(1)}$. Hence $|M|/2^n \geq |Z|/(2^n\binom{n}{\leq m^{1-\Omega(1)}}) \geq 2^{-m^{1-\Omega(1)}}$ where we wrote $\binom{n}{\leq r}\coloneqq\sum_{i=0}^r\binom{n}{i}$.

We can now take our second step down the circuit. We have $M\subseteq \Xor^{-1}(1)=\Sigma^{-1}(1)=\bigcup_i \Gamma_i^{-1}(1)$ where $\Gamma_i$ is the $i$-th CNF (type $\land\circ\lor$) feeding into the gate computing $\Sigma$. We choose any $i$ that maximises $|M\cap \Gamma_i^{-1}(1)|$, and set $\Gamma \coloneqq \Gamma_i$. In summary,
\begin{itemize}
    \item $\Gamma$ accepts the set~$X \coloneqq M\cap \Gamma^{-1}(1)$ of density $|X|/2^n\geq |M|/(2^n|\Pi|) \geq 2^{-m^{1-\Omega(1)}}$.
    \item $\Gamma$ rejects the set $Y$.
\end{itemize}

\subsubsection*{Third step}

Fix a constant $\epsilon>0$ such that $|X|/2^n \geq 2^{-m^{1-\epsilon}}$. Apply \cref{thm:unpred} to the set $X$ with parameters
\[
k ~\coloneqq~ m^{1-\epsilon},\quad
r ~\coloneqq~ m^{1+\epsilon/4}, \quad
q ~\coloneqq~ m^{1+\epsilon/2}.
\]
Note that $krq \leq o(n)$ and hence
\begin{align}
\Pr_{(\x,\R)\sim X\times  \binom{[n]}{r}}[\, \text{$\x$ contains a size-$q$ certificate for $\R$ wrt $X$}\,] 
&~\leq~ o(1).
\notag
\intertext{
By averaging, there is some set $X'\subseteq X$, $|X'|\geq |X|/2$, such that for every $x\in X'$,}
\Pr_{\R\sim \binom{[n]}{r}}[\, \text{$x$ contains a size-$q$ certificate for $\R$ wrt $X$}\,]
&~\leq~ o(1).
\label{eq:cert-prob}
\end{align}
Let $x\in X'$ and consider the following process to sample a random variable $\y(x)\in Y\cup \{\bot\}$.
\begin{enumerate}[label=(\roman*),noitemsep]
    \item Sample a random $\R \sim \binom{[n]}{r}$.
    \item \label{it:cert}
    If $x$ contains a size-$q$ certificate for $\R$ wrt $X$, output $\y(x)\coloneqq \bot$ (failure).
    \item \label{it:flip}
    If there is some $y\in Y$ such that $y$ can be obtained from $x$ by flipping some subset of bits in $\R$, output~$\y(x)\coloneqq y$. Otherwise output $\y(x)\coloneqq \bot$ (failure).
\end{enumerate}
Step~\ref{it:cert} fails only with probability $o(1)$ because of \eqref{eq:cert-prob}. Moreover, step~\ref{it:flip} fails with probability $o(1)$ because every $x\in X'\subseteq M$ satisfies the conclusion of \cref{lem:spreadify}. In summary, for every $x\in X'$,
\begin{equation}\label{eq:fail}
\Pr[\,\y(x) = \bot\,] ~\leq~ o(1).
\end{equation}
Let $Y'\coloneqq\bigcup_{x\in X'} \supp(\y(x))\setminus\{\bot\}\subseteq Y$. To estimate the density of $Y'$, note that \cref{eq:fail} implies that each~$x\in X'$ contributes at least one element to $Y'$ and moreover each $y\in Y'$ can result from at most $\smash{\binom{n}{\leq r}}$ many $x\in X'$. Therefore $|Y'|/2^n \geq |X'|/(2^n\binom{n}{\leq r})\geq 2^{-m^{1+\varepsilon/3}}$.

We can now take our third step down the circuit. We have $Y'\subseteq\Gamma^{-1}(0)=\bigcup_i \Lambda_i^{-1}(0)$ where $\Lambda_i$ is the $i$-th clause (type $\lor$) of the CNF $\Gamma$. We choose any $i$ that maximises $|Y'\cap \Lambda_i^{-1}(0)|$, and set $\Lambda \coloneqq \Lambda_i$. In summary,
\begin{itemize}
    \item $\Lambda$ accepts the set~$X$.
    \item $\Lambda$ rejects the set $Y''\coloneqq Y'\cap \Lambda^{-1}(0)$ of density $|Y''|/2^n\geq |Y'|/(2^n|\Pi|)\geq 2^{-m^{1+\varepsilon/2}}=2^{-q}$.
\end{itemize}

\subsubsection*{Final step}

Since the clause $\Lambda$ rejects at least a fraction $|Y''|/2^n\geq 2^{-q}$ of all inputs, it must contain at most $q$ literals. Let $Q\subseteq [n]$, $|Q|\leq q$, be the set of variables mentioned in $\Lambda$. Pick any $y\in Y''$, $x\in X'$, $R\in \binom{[n]}{r}$ such that~$y=(\y(x)\mid \R=R)$ in the above process. Note that $x$ and $y$ differ only on some coordinates in $R$ and~$x$ contains no size-$q$ certificate for $R$ wrt $X$. Thus, there must exist some~$x'\in X$ such that $x'_Q=y_Q$. This means that $\Lambda(x')=\Lambda(y)=0$. But this contradicts the fact that $\Lambda$ accepts all of $X$.

This concludes the proof of \cref{thm:main}.

\subsection{Proof of \cref{lem:spreadify}}
\label{sec:spreadify}

Let $A \coloneqq S_m(x) \cap Y$ so that $|A| / \binom{n}{m} \geq 2^{-m^{1 - \Omega(1)}}$. For notational
simplicity, we assume $x \coloneqq 0^n$. Let $\epsilon = \epsilon(n) \coloneqq 1 / \log n = o(1)$ and $p\coloneqq m/n$. Our goal is to make $A-x=A$ (thought of as a set family) $(1/p)^{1-\epsilon}$-spread by excluding a kernel. Let $I \subseteq [n]$ be the largest set (perhaps $I=\emptyset$) where the $(1/p)^{1-\epsilon}$-spreadness condition~\eqref{eq:spread} fails:
\[
\Pr_{\x \sim A}[\x_I = 1_I]
~>~ p^{(1 - \epsilon) |I|}.
\]
Consider the input~$x'' \in \{0, 1\}^n$ that is the indicator vector for $I$ (that is, $x''_i = 1 \Leftrightarrow i \in I$). This would be our candidate for the center (or kernel) to satisfy the statement of the lemma, except we do not know whether $x'' \in \Xor^{-1}(1)$. To fix this, we flip one more coordinate in $x''$ if needed. Let $i_0 \in [n]\setminus I$ be the coordinate where $1$ occurs most frequently among elements of $\{x\in A: x_I=1_I\}-x''$. Altogether, we define:
\begin{itemize}
\item If $\Xor(x'')=1$, then we set $x' \coloneqq x''$ and $I' \coloneqq I$.
\item If $\Xor(x'')=0$, then we set $x'$ to be $x''$ but with the $i_0$-th bit flipped, and $I' \coloneqq I\cup\{i_0\}$.
\end{itemize}
The following two claims conclude the proof of \cref{lem:spreadify}.

\begin{claim}
$\dist(x,x')=|I'|\leq m^{1-\Omega(1)}$.
\end{claim}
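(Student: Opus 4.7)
The plan is a short combinatorial calculation, since the hard work has already been done by the definitions. Because $x = 0^n$ and $x'$ is the indicator vector of $I'$ (where $I' \in \{I, I \cup \{i_0\}\}$), we immediately have $\dist(x, x') = |x'|_1 = |I'| \leq |I| + 1$, so it suffices to prove $|I| \leq m^{1 - \Omega(1)}$.

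I would bound $|I|$ by pitting two estimates of $\Pr_{\x \sim A}[\x_I = 1_I]$ against each other. On one hand, the maximality of $I$ gives the lower bound $\Pr_{\x \sim A}[\x_I = 1_I] > p^{(1 - \epsilon) |I|}$. For the upper bound, I would use the structural fact that every $x \in A$ has Hamming weight exactly $m$, since $A \subseteq S_m(0^n)$. Hence the number of weight-$m$ vectors in $\{0,1\}^n$ containing $I$ is at most $\binom{n - |I|}{m - |I|}$ (and $0$ if $|I| > m$), and combining with the density bound $|A| \geq \binom{n}{m} \cdot 2^{-m^{1 - \Omega(1)}}$ yields
\[
\Pr_{\x \sim A}[\x_I = 1_I] ~\leq~ \frac{\binom{n - |I|}{m - |I|}}{|A|} ~\leq~ \frac{\binom{n - |I|}{m - |I|}}{\binom{n}{m}} \cdot 2^{m^{1 - \Omega(1)}} ~\leq~ p^{|I|} \cdot 2^{m^{1 - \Omega(1)}},
\]
where the final step uses the standard identity $\binom{n - |I|}{m - |I|} / \binom{n}{m} = \prod_{j = 0}^{|I| - 1} (m - j)/(n - j) \leq (m/n)^{|I|}$.

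Combining the two estimates and rearranging gives $p^{-\epsilon |I|} < 2^{m^{1 - \Omega(1)}}$, i.e., $\epsilon |I| \log(1/p) \leq m^{1 - \Omega(1)}$. Plugging in $\epsilon = 1/\log n$ and $\log(1/p) = \log(n/m) = (2/3) \log n$, the coefficient $\epsilon \log(1/p)$ is a positive constant, so we conclude $|I| \leq O(m^{1 - \Omega(1)}) = m^{1 - \Omega(1)}$ and thus $|I'| \leq m^{1 - \Omega(1)}$.

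There is no real obstacle here: the whole argument is just selecting the right counting identity and doing arithmetic. The only mild subtlety is confirming that the density-loss factor $2^{m^{1 - \Omega(1)}}$ is beaten by the $\epsilon$-slack in the spread exponent when $\epsilon \log(1/p)$ is a constant, which is exactly why the parameter $\epsilon = 1/\log n$ was chosen at the start of \cref{sec:spreadify}.
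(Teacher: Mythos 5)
Your proof is correct and is essentially the same argument as in the paper: both pit the spreadness lower bound $\Pr_{\x\sim A}[\x_I=1_I]>p^{(1-\epsilon)|I|}$ against the upper bound $\Pr_{\x\sim A}[\x_I=1_I]\le p^{|I|}/\Pr_{\x\sim\binom{[n]}{m}}[\x\in A]$, and then use $\epsilon\log(1/p)=\Theta(1)$. The only cosmetic difference is that you spell out the binomial-ratio identity $\binom{n-|I|}{m-|I|}/\binom{n}{m}\le p^{|I|}$ explicitly, while the paper invokes $\Pr[\x_I=1_I]\le p^{|I|}$ directly.
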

\begin{proof}
For $\x \sim \binom{[n]}{m}$, we have $\Pr[\x_{I} = 1_I]
\ge
\Pr[\x_{I} = 1_I \mid \x \in A]
\Pr[\x \in A]$. Using this we get
\[
2^{-m^{1 - \Omega(1)}}
~=~ \Pr[\x \in A]
~\le~
\frac{\Pr[\x_{I} = 1_I]}{
\Pr[\x_{I} = 1_I \mid \x \in A]
}
~\le~
\frac{p^{|I|}}{p^{(1-\epsilon)|I|}}
~=~ p^{\epsilon |I|}.
\]
Thus $|I'| \le |I| + 1 \le m^{1 - \Omega(1)}/(\epsilon \log (1/p)) + 1 \leq
m^{1 - \Omega(1)}$.
\end{proof}

\begin{claim}
$Y - x'$ is $(m^{1 + o(1)} / n, o(1))$-satisfying. 
\end{claim}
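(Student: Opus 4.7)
The plan is to exhibit a subfamily of $Y - x'$ that is $\kappa$-spread for $\kappa = (1/p)^{1 - o(1)}$ and then apply \cref{lem:satisfy}. Since the property of being $(p, \epsilon)$-satisfying is monotone---enlarging the family can only make the event in \eqref{eq:sat-def} more likely---it suffices to find such a subfamily inside $Y - x'$.

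First I would handle the case $\Xor(x'') = 1$, in which $x' = x''$ and $I' = I$. Set $A^* \coloneqq \{a \in A : a_I = 1_I\}$ and $\mathcal{B} \coloneqq \{a \oplus x' : a \in A^*\}$, whose elements are subsets of $[n] \setminus I'$ of size $m - |I'|$ sitting inside $Y - x'$. For any nonempty $J \subseteq [n] \setminus I$,
\[
\Pr_{\bm{a} \sim \mathcal{B}}[J \subseteq \bm{a}]
~=~ \frac{\Pr_{\x \sim A}[\x_{I \cup J} = 1_{I \cup J}]}{\Pr_{\x \sim A}[\x_I = 1_I]}.
\]
The denominator exceeds $p^{(1-\epsilon)|I|}$ by the defining property of $I$, while the numerator is at most $p^{(1-\epsilon)(|I|+|J|)}$ by the maximality of $I$ (the strictly larger set $I \cup J$ cannot also violate the spreadness condition). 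The ratio is thus at most $p^{(1-\epsilon)|J|}$, so $\mathcal{B}$ is $\kappa$-spread for $\kappa = (1/p)^{1-\epsilon}$.

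Next I would adapt the argument to the case $\Xor(x'') = 0$, where $x'$ additionally flips $i_0$ and $I' = I \cup \{i_0\}$. Take $A^* \coloneqq \{a \in A : a_I = 1_I,\, a_{i_0} = 1\}$ and define $\mathcal{B}$ as before. The numerator of the analogous ratio is bounded by maximality of $I$ applied to $I \cup \{i_0\} \cup J$; the denominator now carries an extra factor $\Pr_{\x \sim A}[\x_{i_0} = 1 \mid \x_I = 1_I]$. The extremal choice of $i_0$ as the most frequent coordinate outside $I$, combined with the sphere constraint $|a| = m$, gives $\Pr_{\x \sim A}[\x_{i_0} = 1 \mid \x_I = 1_I] \geq (m - |I|)/(n - |I|) = p(1 - o(1))$ since $|I| \leq m^{1 - \Omega(1)}$. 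Working through the ratio shows that the spreadness is degraded only by a constant factor, so $\mathcal{B}$ is $\kappa$-spread with $\kappa = \Theta((1/p)^{1-\epsilon}) = (1/p)^{1-o(1)}$.

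The proof concludes by invoking \cref{lem:satisfy} with $p' \coloneqq m^{1+o(1)}/n$, where the $m^{o(1)}$ slack is chosen to absorb the $\log(m/o(1))$ factor appearing in the Rao--Alweiss--style threshold. Since $\mathcal{B} \subseteq Y - x'$, monotonicity of satisfyingness transfers the conclusion to $Y - x'$ itself. The main obstacle is the bookkeeping in the second case: a careless correction used to enforce $x' \in \Xor^{-1}(1)$ could collapse the denominator and destroy spreadness. The extremal choice of $i_0$, together with the sphere constraint $|a| = m$, is exactly what keeps this loss multiplicative and bounded, leaving $\kappa$ within a constant factor of $p^{-(1-\epsilon)}$.
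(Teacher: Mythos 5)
Your proof is correct and follows essentially the same route as the paper: define the translated subfamily $B = \{x \in A : x_{I'} = 1_{I'}\} - x'$, bound $\Pr_{\bm a \sim B}[J \subseteq \bm a]$ as a ratio of $A$-probabilities, use the maximality of $I$ to control the numerator and the extremal choice of $i_0$ (together with the sphere constraint $|a|=m$) to keep the denominator from collapsing, and then invoke \cref{lem:satisfy}. The paper presents the spreadness bound as a single contradiction argument (showing non-spreadness of $B$ would violate maximality of $I$) rather than splitting into the two parity cases, and tracks the small loss uniformly as a factor $(1/p)^{O(\epsilon)}$ producing a $(1/p)^{1-2.1\epsilon}$-spread family; your two-case treatment and "degraded by a constant factor" accounting amount to the same calculation.
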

\begin{proof}
Let $B \coloneqq \{x\in A: x_{I'}=1_{I'}\}-x' \subseteq Y-x'$. We will show that $B$ is $(1/p)^{1 - 2.1 \epsilon}$-spread. This, together with \Cref{lem:satisfy}, would imply that~$B$ is $(p^{1 - 3 \epsilon}, o(1))$-satisfying, which proves the claim since $p^{1-3\epsilon}\leq m^{1+o(1)}/n$. Assume for contradiction that~$B$ is not $(1/p)^{1 - 2.1\epsilon}$-spread. This means there is a non-empty $J\subseteq[n]\setminus I'$ with
\[
\Pr_{\x \sim B}[\x_J = 1_J]
~>~
p^{(1 - 2.1 \epsilon) |J|}.
\]
We now claim that $\Pr_{\x \sim A}[\x_{I' \cup J} = 1_{I' \cup J}]>p^{(1-\epsilon)|I'\cup J|}$, which would contradict the maximality of $I$. Indeed, we calculate (assuming $I'=I\cup\{i_0\}$, as the other case is similar)
\begin{align*}
\Pr_{\x \sim A}[\x_{I' \cup J} = 1_{I' \cup J}]
~&=~
\Pr_{\x \sim A}[\x_{I'} = 1_{I'}] \cdot
\Pr_{\x \sim A}[\x_{J} = 1_{J} \mid \x_{I'} = 1_{I'}] \\
&=~ \Pr_{\x \sim A}[\x_{I'} = 1_{I'}] \cdot \Pr_{\x \sim B}[\x_{J} = 1_{J}] \\
&>~ 
\Pr_{\x \sim A}[\x_{I'} = 1_{I'}] \cdot p^{(1 - 2.1 \epsilon) |J|} \\
&\ge~ 
\Pr_{\x\sim A}[\x_{i_0} = 1 \mid \x_{I} = 1_{I}] \cdot \Pr_{\x \sim A}[\x_{I} = 1_{I}]
\cdot p^{(1 - 2.1 \epsilon) |J|} \\
&\ge~ 
\frac{m - |I|}{n - |I|} \cdot p^{(1 - \epsilon) |I|} \cdot
p^{(1 - 2.1 \epsilon) |J|} \\
&\ge~
p \cdot (1 - m^{-\Omega(1)}) \cdot p^{(1 - \epsilon) |I|} \cdot p^{(1 - 2.1 \epsilon) |J|} \\
&\ge~
p^{(1 - \epsilon) (|I| + |J| + 1)} \cdot p^{\epsilon}
\cdot p^{-\epsilon |J|} \cdot (1 - m^{-\Omega(1)}) \cdot p^{-0.1\epsilon} \\
&\ge~
p^{(1 - \epsilon) |I' \cup J|}.
\qedhere
\end{align*}
\end{proof}

\pagebreak
\section{Block unpredictability lemma}
\label{sec:block-unpred}

In this section, we prove the block unpredictability lemma, \Cref{thm:unpred}. We start with some basic facts about  entropy (\cref{sec:entropy}) and set shattering (\cref{sec:shatter}). Then we prove \Cref{thm:unpred} in \cref{sec:unpred-proof}.

\subsection{Entropy}
\label{sec:entropy}

The usual Shannon entropy of a random variable $\X$ is defined by $\entropy(\X) \coloneqq \sum_{x \in \supp(\X)} p(x) \log (1/p(x))$ where~$p(x) \coloneqq \Pr[\x=x]$.
Given two random variables~$\X$ and $\Y$, we define the conditional entropy of~$\X$ given $\Y$ by $\entropy(\X\mid\Y) \coloneqq \Exp_{\bm{y} \sim \Y}\left[\entropy(\X\mid\Y=\bm{y})\right]$. A simple form of the \emph{chain rule} for entropy states that~$\entropy(\X\Y) = \entropy(\X) + \entropy(\Y\mid\X)$.
For convenience, if~$X \subseteq \bool^n$ is a set, we write $\X\sim X$ for the random variable that is uniformly distributed over $X$. In particular, $\X_T$ denotes the random variable $\X$ marginalised onto coordinates $T \subseteq [n]$.
For more background on entropy, we refer to~\cite{Cover2005}. We now recall two useful facts about the entropy of marginal distributions.

\begin{lemma}
\label{lem:deficiency-reduction}
Let $\X\in \{0,1\}^n$ be a random variable with $\entropy(\X) \ge n - k$. For every $t$ and $\delta > 0$,
\[
\Pr_{\T \sim \binom{[n]}{t}}\big[\,\entropy(\X_\T) \ge t - \textstyle\frac{kt}{\delta n}\,\big]
~\ge~
1 - \delta.
\]
\end{lemma}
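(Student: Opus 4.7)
The plan is to combine Shearer's entropy inequality with Markov's inequality. In the first step I will show that the expected marginal entropy over $\T \sim \binom{[n]}{t}$ satisfies $\Exp_\T[\entropy(\X_\T)] \geq t - kt/n$. In the second step, Markov applied to the non-negative entropy deficiency $t - \entropy(\X_\T)$ will deliver the claimed tail bound.

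For the first step, I will invoke Shearer's lemma: for any family $\Fs$ of subsets of $[n]$ in which every coordinate is covered at least $d$ times, $\entropy(\X) \leq \frac{1}{d}\sum_{F\in\Fs}\entropy(\X_F)$. Taking $\Fs = \binom{[n]}{t}$, each coordinate is covered exactly $\binom{n-1}{t-1} = (t/n)\binom{n}{t}$ times, so
\[
\entropy(\X) ~\leq~ \frac{n}{t\binom{n}{t}} \sum_{T \in \binom{[n]}{t}} \entropy(\X_T) ~=~ \frac{n}{t}\,\Exp_\T[\entropy(\X_\T)].
\]
Since $\entropy(\X) \geq n - k$ by hypothesis, rearranging yields $\Exp_\T[\entropy(\X_\T)] \geq (t/n)(n-k) = t - kt/n$.

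For the second step, observe that $\entropy(\X_\T) \leq t$ pointwise (as $\X_\T$ is supported on $\bool^t$), so the random variable $t - \entropy(\X_\T)$ is non-negative with expectation at most $kt/n$. Markov's inequality then gives $\Pr_\T[\,t - \entropy(\X_\T) \geq kt/(\delta n)\,] \leq \delta$, which is exactly the complement of the desired event.

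I do not foresee any real obstacle: the argument is a direct composition of two classical tools, with only the standard Shearer counting requiring attention. Should one prefer to avoid invoking Shearer explicitly, an equivalent derivation by repeated application of the chain rule (that is, Han's inequality for $t$-marginals) delivers the same expectation bound $\Exp_\T[\entropy(\X_\T)] \geq (t/n)\entropy(\X)$.
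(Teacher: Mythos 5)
Your proof is correct and follows essentially the same route as the paper: bound $\Exp_\T[\entropy(\X_\T)]$ from below via Shearer's inequality and then apply Markov to the nonnegative deficiency $t - \entropy(\X_\T)$. The paper realizes the Shearer step via a random cyclic-interval cover induced by a random permutation, whereas you apply Shearer directly to the full family $\binom{[n]}{t}$; this is a purely cosmetic difference and both yield the same bound $\Exp_\T[\entropy(\X_\T)] \ge t - kt/n$.
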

\begin{proof}
Let ${\bm{\pi}}\colon \mathbb{Z}_n\to \mathbb{Z}_n$ be a uniform random permutation and set $\T_i \coloneqq \{{\bm{\pi}}(i+j) + 1 \mid j \in [t]\}$ for~$i \in [n]$. Note that $\T_i$ and $\T\sim\binom{[n]}{t}$ have the same distribution. By Shearer's inequality, $n - k \le \entropy(\X) \le 1/t\cdot  \sum_{i \in [n]} \Exp_{\bm{\pi}}[\entropy(\X_{\T_i})] = n/t\cdot \Exp_\T[\entropy(\X_\T)]$, and therefore $\Exp_\T[\entropy(\X_\T)] \ge t - kt/n$. Applying Markov's inequality to the nonnegative random variable $t - \entropy(\X_\T)$ completes the proof.
\end{proof}

\begin{lemma}
\label{lem:conditional-entropy}
Let $\X\in \{0,1\}^n$ be a random variable with $\entropy(\X) \ge n - k$. For every $T\in\binom{[n]}{t}$ and $\delta > 0$,
\[
\Pr_{\x\sim\X}\big[\entropy(\X_T \mid \X_{[n] \setminus T} = \x_{[n] \setminus T}) \geq t - k/\delta\,\big]
~\geq~ 1 - \delta.
\]
\end{lemma}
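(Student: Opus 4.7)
The plan is to mimic the final step of the proof of \Cref{lem:deficiency-reduction}: derive a lower bound on the expected conditional entropy, then apply Markov's inequality to the nonnegative deficit. Unlike \Cref{lem:deficiency-reduction}, no Shearer-type averaging is required here, because the block $T$ is already fixed; only the chain rule is needed.

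First, I would apply the chain rule
\[
\entropy(\X) ~=~ \entropy(\X_{[n]\setminus T}) + \entropy(\X_T \mid \X_{[n]\setminus T})
\]
and combine it with the trivial upper bound $\entropy(\X_{[n]\setminus T}) \le n - t$ to obtain
\[
\entropy(\X_T \mid \X_{[n]\setminus T}) ~\ge~ (n - k) - (n - t) ~=~ t - k.
\]
Next, unfolding the definition of conditional entropy,
\[
\entropy(\X_T \mid \X_{[n]\setminus T}) ~=~ \Exp_{\x\sim\X}\bigl[\entropy(\X_T \mid \X_{[n]\setminus T} = \x_{[n]\setminus T})\bigr].
\]
Define the deficit $\bm{D}(\x) \coloneqq t - \entropy(\X_T \mid \X_{[n]\setminus T} = \x_{[n]\setminus T})$. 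Since $\X_T$ takes values in $\bool^t$, every value of $\bm{D}$ lies in $[0,t]$, and by the bound above, $\Exp_{\x\sim\X}[\bm{D}(\x)] \le k$.

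Finally, applying Markov's inequality to the nonnegative random variable $\bm{D}$ gives $\Pr_{\x\sim\X}[\bm{D}(\x) \ge k/\delta] \le \delta$, which after rearranging is precisely the complement of the desired event. No step presents a genuine obstacle; the only thing to double-check is that the ``bad'' quantity one feeds into Markov is genuinely nonnegative, which follows from the fact that the entropy of any distribution on $\bool^t$ is at most $t$.
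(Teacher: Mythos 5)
Your proof is correct and takes essentially the same route as the paper's: chain rule, the bound $\entropy(\X_{[n]\setminus T}) \le n-t$, and Markov's inequality applied to the nonnegative entropy deficit. The paper leaves the deficit random variable implicit; you spell it out and check nonnegativity, which is a harmless elaboration of the same argument.
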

\begin{proof}
We have $n - k \le \entropy(\X) = \entropy(\X_{[n] \setminus T}) + \entropy(\X_T \mid \X_{[n] \setminus T})$ by the chain rule. Using $\entropy(\X_{[n] \setminus T}) \le n - t$ we get $\entropy(\X_T \mid \X_{[n] \setminus T}) \ge t - k$. Since $\entropy(\X_T \mid \X_{[n] \setminus T}) = \Exp_{\x\sim \X} \left[\entropy(\X_T \mid \X_{[n] \setminus T} = \x_{[n] \setminus T})\right]$ by Markov's inequality we get the desired probability.
\end{proof}

\subsection{Shattered sets}
\label{sec:shatter}

Let $\As\subseteq 2^{[n]}$ be a set family. For $B\subseteq[n]$, we write $\As_B=\{A\cap B : A\in \As\}$ for the projection of $\As$ onto~$B$. We say that $\As$ \emph{shatters}~$B \subseteq [n]$ if $\As_B=2^B$. Similarly, for a set of strings $X\subseteq \{0,1\}^n$ we say that $X$ \emph{shatters} $B \subseteq [n]$ if the set family~$\{\{i \in [n] \mid x_i = 1\} \mid x \in X\}$ shatters $B$. We will use the following strong version of the usual Sauer--Shelah lemma.
\begin{lemma}[\cite{Pajor1985}]
\label{lem:sauer-shelah}
   A set $X \subseteq \{0,1\}^n$ shatters at least $|X|$ sets.
\end{lemma}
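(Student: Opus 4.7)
The plan is to prove \cref{lem:sauer-shelah} by induction on the dimension $n$. The base case $n=0$ is immediate: the only nonempty $X$ is $X=\{()\}$ with $|X|=1$, which shatters the empty set, and $|X|=0$ trivially gives the bound.

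For the inductive step, given $X \subseteq \{0,1\}^n$, I would split $X$ according to the last coordinate by setting
\[
X_0 \coloneqq \{x\in\{0,1\}^{n-1} : (x,0)\in X\}, \qquad
X_1 \coloneqq \{x\in\{0,1\}^{n-1} : (x,1)\in X\},
\]
so that $|X|=|X_0|+|X_1|$. Let $\mathcal{S}_0,\mathcal{S}_1 \subseteq 2^{[n-1]}$ denote the families of sets shattered by $X_0$ and $X_1$ respectively, viewed as subsets of $\{0,1\}^{n-1}$. By the inductive hypothesis, $|\mathcal{S}_0|\ge |X_0|$ and $|\mathcal{S}_1|\ge |X_1|$.

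The key combinatorial observation is how shattered sets of $X_0$ and $X_1$ lift to shattered sets of $X$. Any $B\in \mathcal{S}_0 \cup \mathcal{S}_1$ (viewed as a subset of $[n]$ not containing $n$) is shattered by $X$. Moreover, if $B\in \mathcal{S}_0\cap \mathcal{S}_1$, then $B\cup\{n\}$ is also shattered by $X$, because for each $a\in\{0,1\}^B$ we can find strings in both $X_0$ and $X_1$ with that restriction, and these yield strings in $X$ realising both extensions of $a$ on coordinate $n$. Counting the contributions from $\mathcal{S}_0\cup \mathcal{S}_1$ and the extra contributions from $\mathcal{S}_0\cap \mathcal{S}_1$, the number of sets shattered by $X$ is at least
\[
|\mathcal{S}_0\cup \mathcal{S}_1| + |\mathcal{S}_0\cap \mathcal{S}_1|
~=~ |\mathcal{S}_0| + |\mathcal{S}_1|
~\ge~ |X_0| + |X_1|
~=~ |X|,
\]
completing the induction.

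The main obstacle is really just the bookkeeping in the counting step: one must avoid the natural but wasteful bound $|\mathcal{S}_0\cup\mathcal{S}_1|$, and instead notice that each $B\in \mathcal{S}_0\cap\mathcal{S}_1$ produces two distinct shattered sets in $X$ (namely $B$ and $B\cup\{n\}$), which is exactly what turns inclusion-exclusion into the additive bound $|\mathcal{S}_0|+|\mathcal{S}_1|$. Everything else is routine.
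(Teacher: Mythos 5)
Your proof is correct, and it is the standard inductive argument for Pajor's strengthening of Sauer--Shelah. The paper does not prove this lemma --- it simply cites \cite{Pajor1985} --- so there is no in-paper proof to compare against. Your splitting on the last coordinate, lifting $\mathcal{S}_0\cup\mathcal{S}_1$ directly and lifting $\mathcal{S}_0\cap\mathcal{S}_1$ to $\{B\cup\{n\}:B\in\mathcal{S}_0\cap\mathcal{S}_1\}$ (which is disjoint from the first collection since its members contain $n$), and then invoking $|\mathcal{S}_0\cup\mathcal{S}_1|+|\mathcal{S}_0\cap\mathcal{S}_1|=|\mathcal{S}_0|+|\mathcal{S}_1|$ are all exactly right. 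One small remark: in the base case it is cleaner to simply observe that for $n=0$ either $X=\emptyset$ (shatters $0\geq|X|$ sets) or $X=\{()\}$ (shatters the empty set, so $1\geq|X|$); your phrasing conflates the two cases slightly but the content is fine.
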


We say that a set family $\As$ is \emph{downward-closed} if for every pair of sets $S$, $T$ such that $S \subseteq T$, $T \in \As$ we also have $S \in \As$.
We will use the following simple combinatorial fact, which states that the density of $\mathcal{A}$ decreases monotonically over the slices $\smash{\binom{[n]}{k}}$.
\begin{lemma}
\label{lem:increasing-density}
Let $\mathcal{A} \subseteq 2^{[n]}$ be downward-closed and define $\mathcal{A}_k \coloneqq \mathcal{A} \cap \binom{[n]}{k}$. Then
\[
\frac{|\mathcal{A}_1|}{\binom{n}{1}} \ge \frac{|\mathcal{A}_2|}{\binom{n}{2}} \ge \dots \ge \frac{|\mathcal{A}_n|}{\binom{n}{n}}.
\]  
\end{lemma}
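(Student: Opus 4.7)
The plan is to prove the lemma by establishing each consecutive inequality $|\mathcal{A}_k|/\binom{n}{k} \ge |\mathcal{A}_{k+1}|/\binom{n}{k+1}$ via a standard double-counting argument on the bipartite containment structure between $\mathcal{A}_k$ and $\mathcal{A}_{k+1}$. Fix $k \in \{1,\dots,n-1\}$ and consider the set
\[
P \coloneqq \{(S,T) : S \in \mathcal{A}_k,\, T \in \mathcal{A}_{k+1},\, S \subset T\}.
\]
I will count $|P|$ in two ways to derive the ratio bound, and then observe that $\binom{n}{k+1}/\binom{n}{k}=(n-k)/(k+1)$ makes the conclusion immediate.

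First I would count $|P|$ from the $T$-side. Each $T \in \mathcal{A}_{k+1}$ has exactly $k+1$ subsets of size $k$, and this is the step where downward-closure enters crucially: all such subsets automatically lie in $\mathcal{A}_k$. Hence $|P| = (k+1)\,|\mathcal{A}_{k+1}|$. Second, I would count $|P|$ from the $S$-side: each $S \in \mathcal{A}_k$ has at most $n-k$ supersets of size $k+1$ in $2^{[n]}$ and therefore at most $n-k$ such supersets inside $\mathcal{A}_{k+1}$. Hence $|P| \le (n-k)\,|\mathcal{A}_k|$. Combining the two gives
\[
(k+1)\,|\mathcal{A}_{k+1}| \;\le\; (n-k)\,|\mathcal{A}_k|,
\]
which, after dividing by $\binom{n}{k+1} = \binom{n}{k}\cdot(n-k)/(k+1)$, rearranges to the desired inequality $|\mathcal{A}_{k+1}|/\binom{n}{k+1} \le |\mathcal{A}_k|/\binom{n}{k}$. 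Chaining over $k=1,\dots,n-1$ yields the full statement.

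There is no real obstacle here; the only subtlety to highlight is that downward-closure is used only in the $T$-side count (to guarantee every size-$k$ subset of $T$ belongs to $\mathcal{A}$), while the $S$-side bound is a trivial ambient counting estimate. Accordingly, the argument fits in just a few lines and requires no external machinery.
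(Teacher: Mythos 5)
Your proof is correct and is essentially the same as the paper's: the paper phrases the double-counting argument in terms of a bipartite graph between $\mathcal{A}_{k-1}$ and $\mathcal{A}_k$ while you count the pairs $(S,T)$ directly between $\mathcal{A}_k$ and $\mathcal{A}_{k+1}$, but the degree/incidence counts and the final ratio manipulation are identical.
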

\begin{proof}
Consider the bipartite graph $G=(\mathcal{A}_k\cup \mathcal{A}_{k-1}, E)$ where $\{u,v\} \in E$ iff $u \supseteq v$. The degree in the first part equals $k$ and the degree in the second part is at most $n - k + 1$. Hence $|\mathcal{A}_k| k = |E| \le (n-k+1) |\mathcal{A}_{k-1}|$. Thus $|\mathcal{A}_k|/|\mathcal{A}_{k-1}|\le (n-k+1)/k$. Since $\binom{n}{k}/\binom{n}{k-1} = (n-k+1)/k$, the lemma follows.
\end{proof}

\begin{lemma}
\label{lem:shattered}
Let $X \subseteq \{0,1\}^n$ have density $|X|/2^n \ge 2^{-k}$. Then for any $r\geq 1$,
\[
\Pr_{\R \sim \binom{[n]}{r}}[\,\text{$X$ does not shatter $\R$}\,]
~\leq~
O(kr/n)^{1/4}.\]
\end{lemma}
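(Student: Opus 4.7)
The plan is to reduce shattering of a random $r$-block $\R$ to a density/entropy question on a moderately larger random block $\T \supseteq \R$, and then combine \cref{lem:sauer-shelah} (Sauer--Shelah) with \cref{lem:increasing-density} (monotone density).

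I would first couple $\R$ and $\T$ as follows: fix $t$ to be a small constant multiple of $r$ (e.g., $t \coloneqq 3r$); sample $\R \sim \binom{[n]}{r}$ and $\T' \sim \binom{[n] \setminus \R}{t - r}$ independently; and set $\T \coloneqq \R \cup \T'$. By symmetry, $\T$ is uniform in $\binom{[n]}{t}$ and, conditional on $\T$, $\R$ is uniform in $\binom{\T}{r}$. Since $|X|/2^n \geq 2^{-k}$ gives $\entropy(\X) \geq n - k$, \cref{lem:deficiency-reduction} applied with parameter $\delta_1$ yields that, with probability $\geq 1 - \delta_1$ over $\T$, $\entropy(\X_\T) \geq t - \eta$ where $\eta \coloneqq kt/(\delta_1 n)$; call such $\T$ \emph{good}. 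For good $\T$, $|X_\T| \geq 2^{t - \eta}$, so by \cref{lem:sauer-shelah} applied to $X_\T \subseteq \{0,1\}^\T$, the family $\mathcal{A} \subseteq 2^\T$ of subsets of $\T$ shattered by $X$ (equivalently by $X_\T$) is downward-closed with $|\mathcal{A}| \geq 2^{t - \eta}$.

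Next, writing $\alpha_j \coloneqq |\mathcal{A} \cap \binom{\T}{j}| / \binom{t}{j}$, the monotonicity in \cref{lem:increasing-density} makes $\alpha_j$ non-increasing in $j$. From $|\mathcal{A}|/2^t = \sum_{j=0}^t (\binom{t}{j}/2^t)\,\alpha_j$, splitting at level $r$, and using $\alpha_j \leq 1$ for $j \leq r$ and $\alpha_j \leq \alpha_r$ for $j > r$, one obtains
\[
\alpha_r ~\geq~ \frac{2^{-\eta} - \tau}{1 - \tau}, \qquad \tau \coloneqq \sum_{j=0}^{r} \binom{t}{j}\big/2^t.
\]
Since $t$ is a large enough constant times $r$, $\tau \leq 2^{-\Omega(t)}$, which is negligible against $2^{-\eta}$ in the regime of interest, so $1 - \alpha_r \leq O(\eta)$ (using $2^{-\eta} \geq 1 - \eta \ln 2$ for $\eta < 1$). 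Conditional on good $\T$, the probability that $\R \notin \mathcal{A}$ is $1 - \alpha_r$, so
\[
\Pr_{\R}[\,X \text{ does not shatter } \R\,] ~\leq~ \delta_1 + O(\eta) ~=~ \delta_1 + O\!\left(\frac{kr}{\delta_1 n}\right).
\]
Optimizing $\delta_1 \asymp \sqrt{kr/n}$ yields $O(\sqrt{kr/n})$, which implies (and is actually stronger than) the claimed $O(kr/n)^{1/4}$.

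The main obstacle is the joint parameter choice. We need $t$ large enough that the binomial tail $\tau$ is much smaller than $2^{-\eta}$, while $t$ itself enters multiplicatively in $\eta = kt/(\delta_1 n)$, and $\eta$ must stay below $1$ for the linearization $2^{-\eta} \approx 1 - \eta \ln 2$ to apply. All three constraints are jointly satisfied by taking $t$ a fixed multiple of $r$ and $\delta_1 \asymp \sqrt{kr/n}$; for tiny $r$ the conclusion is trivial since the right-hand side is $\Omega(1)$.
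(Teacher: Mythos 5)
Your proof is correct, and it actually yields a \emph{stronger} bound than the paper's. The overall scaffolding is the same -- couple $\R$ with a larger random block $\T$, invoke \cref{lem:deficiency-reduction} to control the entropy deficiency on $\T$, feed this through Sauer--Shelah (\cref{lem:sauer-shelah}) to lower-bound the number of shattered subsets of $\T$, and then use \cref{lem:increasing-density} to pass from total count to density at level $r$. There are, however, two genuine differences. First, you apply Sauer--Shelah \emph{locally} to the projection $X_\T$, whereas the paper applies it globally to $X$ (getting $|\mathcal{A}| \ge 2^{n-k}$ for the global shattered family) and then runs \cref{lem:deficiency-reduction} on $\mathcal{A}$ rather than on $X$; both routes converge to the same object, namely the family of subsets of $\T$ shattered by $X$, and both are sound. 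Second, and more substantively, your density estimate $\alpha_r \ge (2^{-\eta}-\tau)/(1-\tau)$, hence $1-\alpha_r \le (1-2^{-\eta})/(1-\tau)$, is tighter than the paper's. The paper bounds the upper-level sum by $2^{10r}$ rather than $2^{10r}-S_{<}$, which leaves an additive $2^{-5r}$ error term; for small $r$ this term dominates $\sqrt{kr/n}$ and forces the paper into a two-case analysis (comparing $r$ to $\log(n/k)$), which is precisely where the exponent drops from $1/2$ to $1/4$. Your version keeps the small-level correction $\tau$ only in the denominator, where it is harmless once $\tau \le 1/2$, and so you get $O(\sqrt{kr/n})$ directly with no case analysis. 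One small caveat: $t=3r$ is borderline for ensuring $\tau\le 1/2$ (e.g.\ at $r=1$, $\tau$ is exactly $1/2$); any safely larger constant multiple, such as the paper's $t = 10r$, removes the issue. Also note the inequality $2^{-\eta}\ge 1-\eta\ln 2$ holds for all $\eta\ge 0$ by convexity, so the restriction $\eta<1$ you mention is unnecessary. With these minor adjustments, your argument is clean, correct, and improves the exponent.
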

\begin{proof}
Let $\mathcal{A} \subseteq 2^{[n]}$ be the family of subsets of $[n]$ shattered by $X$. Observe that $\mathcal{A}$ is downward-closed. By \Cref{lem:sauer-shelah}, $|\mathcal{A}| \ge |X| \ge 2^{n-k}$. Our goal is to show that $\mathcal{A}$ contains a random set of size $r$ with high probability. Let $\delta>0$ be a parameter to be chosen later. For $U\in\binom{[n]}{10r}$ we define
\[
\text{$U$ is \emph{good}}
~\iff~
|\As_U|\geq 2^{10r - 10kr/(\delta n)}.
\]
If we choose $\bm{U} \sim \binom{[n]}{10 r}$ at random, then \Cref{lem:deficiency-reduction} implies that $\entropy(\bm{\mathcal{A}}_{\bm{U}}) \ge 10r - 10kr/(\delta n)$ (where we identified~$\mathcal{A}$ with a subset of $\bool^n$) with probability $\geq 1 - \delta$. Since entropy is at most the log of size,
\[\textstyle
\Pr_{\bm{U} \sim \binom{[n]}{10 r}}[\,\text{$\Ub$ is good}\,]
~\geq~ 1-\delta.
\]
\vspace{-1em}
\begin{claim}
Suppose $U\in \binom{[n]}{10 r}$ is good. Then $\tau\coloneqq \Pr_{\R\sim\binom{U}{r}}[\R\in\As]\geq 1- 10kr/(\delta n)-2^{-5r}$. 
\end{claim}
\begin{proof}
There are $\tau \binom{10r}{r}$ many size-$r$ subsets in $\As_U$. Because $\As_U$ is downward-closed, we get from \Cref{lem:increasing-density} that $|\As_U| \leq \sum_{i = 0}^{r-1} \binom{10r}{i} + \tau \sum_{i=r}^{10r} \binom{10r}{i}$. Note that $\sum_{i = 0}^{r-1} \binom{10r}{i} \le 2^{10r \cdot h(1/10)} \le 2^{5r}$ where $h$ is the binary entropy function. Thus $2^{10r - 10kr/(\delta n)}\leq|\As_U|\leq 2^{5r} + \tau2^{10r}$ and hence
\[
\tau
~\ge~
(2^{10r - \frac{10kr}{\delta n}} - 2^{5r})/2^{10r}
~=~
2^{-\frac{10kr}{\delta n}} - 2^{-5r}
~\ge~
1 - 10kr/(\delta n) - 2^{-5r}.
\qedhere
\]
\end{proof}

Sampling $\R\sim\binom{[n]}{r}$ is equivalent to first sampling $\Ub\sim\binom{[n]}{10r}$ and then sampling $\R \sim \binom{\bm{U}}{r}$. Thus,
\[
p~\coloneqq~
\Pr_\R[\R \notin \As]
~\le~
\Pr_\Ub[\text{$\Ub$ is bad}]
+\Pr_{\R \sim \binom{\bm{U}}{r}}[\R \notin \As \mid \text{$\Ub$ is good}]
~\le~
\delta + (1 - \tau).
\]
Choosing $\delta \coloneqq \sqrt{10 k r/n}$ we get that $p \le 2 \sqrt{10 k r/n} + 2^{-5r}$. Let us analyse two cases.
\begin{itemize}
\item Case $r \ge 0.1 \log (n/k) $: Here $2^{-5r} \le \sqrt{k/n}$, so we get $p \le 3 \sqrt{10rk/n}$.
\item Case $r < 0.1 \log (n/k)$: Here we observe that by \Cref{lem:increasing-density} we can bound $p$ with $p' \coloneqq \Pr[\bm{I} \not\in \mathcal{A}]$ where $\bm{I}$ is a random set of size $0.1 \log (n/k)$. From the previous case we have $p' \le 3 \sqrt{k/n \cdot \log (n/k)}$. There exists a constant $C$ such that $k/n \cdot\log (n/k) \le C \cdot \sqrt{k/n}$ (here we use $(\log x)/x = O(1/\sqrt{x})$ for $x \coloneqq n/k$), so $p \le p' \le \sqrt{C} (k/n)^{1/4} \le O(kr/n)^{1/4}$. 
\end{itemize}
\end{proof}

\subsection{Proof of \Cref{thm:unpred}}
\label{sec:unpred-proof}

\Unpred*
\begin{proof}
Let $\delta> 0$ and $t\geq r$ be parameters to be chosen later. Consider sampling $\x\sim X$, $\smash{\T \sim \binom{[n]}{t}}$, and then $\R \sim \binom\T{r}$. Note that $\R\sim\binom{[n]}{r}$. Let $E$ be the event ``$\x$ contains a size-$q$ certificate for $\R$ (wrt~$X$)'' and let $E'$ be the event ``$\x$ contains a size-$q$ certificate $(Q,x_Q)$ for $\R$ such that $Q \cap \T = \emptyset$.''

\begin{claim}
    \label{cl:certificate-separation}
    $\Pr[E' \mid E] \geq 1 - q t/n$.
\end{claim}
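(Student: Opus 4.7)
The plan is to argue that conditioned on $E$, a certificate witnessing $E$ typically has its $Q$-coordinates disjoint from the "extra" coordinates $\T \setminus \R$, which is enough to make $Q$ disjoint from $\T$ itself (since $Q$ is already disjoint from $\R$ by definition of a certificate).

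First I would decouple the randomness. Sampling $(\x, \T, \R)$ is equivalent to sampling $\x \sim X$, then $\R \sim \binom{[n]}{r}$, and finally $\bm{S} \sim \binom{[n] \setminus \R}{t - r}$ and setting $\T \coloneqq \R \cup \bm{S}$. Crucially, the event $E$ depends only on $(\x, \R)$, so $\bm{S}$ is independent of $E$ given $(\x, \R)$. Fix any $(x, R)$ for which $E$ holds, and (deterministically from $(x, R)$) choose a witnessing certificate $(Q^*, x_{Q^*})$ with $|Q^*| \leq q$ and $Q^* \subseteq [n] \setminus R$.

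Next I would observe that whenever $Q^* \cap \bm{S} = \emptyset$, we have $Q^* \cap \T = (Q^* \cap R) \cup (Q^* \cap \bm{S}) = \emptyset$, so the same certificate $(Q^*, x_{Q^*})$ witnesses $E'$. Hence it suffices to lower-bound $\Pr_{\bm{S}}[Q^* \cap \bm{S} = \emptyset]$. By a union bound over the (at most $q$) elements of $Q^*$,
\[
\Pr_{\bm{S}}[Q^* \cap \bm{S} \neq \emptyset] ~\leq~ |Q^*| \cdot \frac{t - r}{n - r} ~\leq~ \frac{q t}{n},
\]
where the last inequality uses $(t - r)/(n - r) \leq t/n$ for $t \leq n$.

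Finally I would average: for every $(x, R)$ in the support of the conditional distribution of $(\x, \R)$ given $E$, we have $\Pr[E' \mid \x = x, \R = R] \geq 1 - qt/n$, and integrating over this conditional distribution yields $\Pr[E' \mid E] \geq 1 - qt/n$. There is no real obstacle; the only thing to be careful about is the conditional independence of $\bm{S}$ from the choice of certificate $Q^*$, which is automatic because $Q^*$ is a function of $(\x, \R)$ alone.
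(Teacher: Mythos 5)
Your proof is correct and takes essentially the same approach as the paper: fix one witnessing certificate per pair $(x,R)$ and union-bound over its at most $q$ coordinates intersecting $\T$. Your explicit resampling $\T = \R \cup \bm{S}$ with $\bm{S} \sim \binom{[n]\setminus\R}{t-r}$ makes the conditional distribution of $\T$ given $\R$ transparent, a small clarity gain over the paper's more terse presentation, but the content is identical.
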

\begin{proof}
    Fix $x \in \bool^n$ and $R \in \binom{[n]}{r}$.  Let $Q_{R,x}\in\binom{[n]}{q}$ be the lexicographically first (if any) subset such that $(Q_{R,x}, x_{Q_{R,x}})$ is a certificate for $R$. Then if $Q_{R,x}$ exists we have 
    $\Pr[Q_{R,x} \cap \T \neq \emptyset] \le \sum_{i \in Q_{R,x}} \Pr [i \in \T] \le qt/n$.
    Then by the law of total probability we get $\Pr[Q_{\R, \x} \cap \T \neq \emptyset \mid E] \le q t/n$. Since~$E'$ is implied by $E$ and the negation of the event ``$Q_{\R, \x} \cap \T \neq \emptyset$,'' the claim follows.
\end{proof}

To prove the lemma, let us first branch on $E'$:
\(\Pr[E] = \Pr[E' \land E] + \Pr[\lnot E' \land E]\).
Using \Cref{cl:certificate-separation}, we bound the second term:
\(\Pr[\lnot E' \land E] \leq \Pr[E] \cdot \Pr[\lnot E' \mid E] \leq q t/ n.\)
Next we bound the first term $\Pr[E' \land E] = \Pr[E']$. Let $L$ be the event ``$\entropy(\X_\T \mid \X_{[n] \setminus \T} = \x_{[n] \setminus \T}) \geq t - k/\delta$''. By \Cref{lem:conditional-entropy} we have $\Pr[\lnot L] \leq \delta$. We can now apply \Cref{lem:shattered} to see that
\begin{equation*}
\Pr[E']
~\leq~
\Pr[\lnot L] + \Pr[\,\text{$\supp(\X_\T \mid \X_{[n] \setminus \T} = \x_{[n] \setminus \T})$ does not shatter $\R$} \mid L ]
~\leq~
\delta + O(kr/(\delta t))^{1/4}.
\end{equation*}
The overall probability that $\x$ contains a certificate for $\R$ is therefore $\Pr[E]\leq q t/n + \delta + O(kr/(\delta t))^{1/4}$. To minimize $\delta + O(kr/(\delta t))^{1/4}$, we pick $\delta \coloneqq (kr/t)^{1/5}$ and get 
\(\Pr[E] \leq q t/n + O(k r/t)^{1/5}.\) 
Now picking $t \coloneqq \Theta((kr)^{1/6} (n/q)^{5/6})$ we get \(\Pr[E] \leq O(k q r/n)^{1/6}.\) Let us finally verify the requirement $t \ge r$:
\[
t/r
~=~ \Theta((kr)^{1/6} (n/q)^{5/6} \cdot r^{-1})
~=~ \Theta(k \cdot (n/(kqr))^{5/6})
~\ge~ \Omega(n/(kqr))^{5/6}.
\]
If the RHS is $<1$, the lemma is trivially true. Otherwise it is $\geq 1$ and the requirement is satisfied.
\end{proof}

\medskip
\subsection*{Acknowledgements}
We thank the anonymous FOCS reviewers for their comments and Emanuele Viola for references.
This work was supported by the Swiss State Secretariat for Education, Research and Innovation (SERI) under contract number MB22.00026.

\medskip

\DeclareUrlCommand{\Doi}{\urlstyle{sf}}
\renewcommand{\path}[1]{\small\Doi{#1}}
\renewcommand{\url}[1]{\href{#1}{\small\Doi{#1}}}

\bibliographystyle{alphaurl}
\bibliography{references}

\end{document}